\titleformat*{\section}{\large\bfseries}
\titleformat*{\subsection}{\it}
\newtheorem{thm}{Theorem}
\newtheorem{lem}{Lemma}
\newtheorem{prp}{Proposition}
\newtheorem{as}{Assumption}
\def\pr{{\mathrm{pr}}}
\def\bmu{{\bm{\mu}}}
\def\bSigma{{\bm{\Sigma}}}
\def\bC{{\bm{C}}}
\def\bCyp{{\bm{C}_{y'}}}
\def\br{{\bm{r}}}
\def\bS{{\bm{S}}}
\def\tr{{\mathrm{tr}}}
\def\by{{\bm{y}}}
\def\bz{{\bm{z}}}
\def\bth{{\bm{\theta}}}
\def\etab{{\bm{\eta}}}
\def\beps{{\bm{\epsilon}}}
\def\bxi{{\bm{\xi}}}
\def\bze{{\bm{\zeta}}}
\newcommand{\var}[1]{\mathrm{var}\left(#1\right)}
\newcommand{\E}[1]{E\left(#1\right)}
\title{{\bf On Misspecified Error Distributions in Bayesian Functional Clustering: Consequences and Remedies}}
\date{}
\begin{document}

\maketitle
\doublespacing

\vspace{-1.5cm}
\begin{center}
{\large Fumiya Iwashige$^1$, Tomoya Wakayama$^{2,\ast}$, Shonosuke Sugasawa$^3$, Shintaro Hashimoto$^1$}

\medskip

\medskip
\noindent
$^1$Department of Mathematics, Hiroshima University\\
$^2$Center for Advanced Intelligence Project, RIKEN\\
$^3$Faculty of Economics, Keio University
\end{center}

\begingroup
\renewcommand\thefootnote{\fnsymbol{footnote}}
\footnotetext[1]{Corresponding author, Email: tomoya.wakayama@riken.jp.}
\endgroup

\vspace{1mm}
\begin{center}
{\bf \large Abstract}
\end{center}

Nonparametric Bayesian approaches provide a flexible framework for clustering without pre-specifying the number of groups, yet they are well known to overestimate the number of clusters, especially for functional data. We show that a fundamental cause of this phenomenon lies in misspecification of the error structure: errors are conventionally assumed to be independent across observed points in Bayesian functional models. Through high-dimensional clustering theory, we demonstrate that ignoring the underlying correlation leads to excess clusters regardless of the flexibility of prior distributions. Guided by this theory, we propose incorporating the underlying correlation structures via Gaussian processes and also present its scalable approximation with principled hyperparameter selection. Numerical experiments illustrate that even simple clustering based on Dirichlet processes performs well once error dependence is properly modeled.

\bigskip\noindent
{\bf Key words}: Dirichlet process; functional clustering; Gaussian process; nonparametric Bayes

\newpage
\section{Introduction}

Bayesian functional clustering has become an essential approach for analyzing heterogeneous functional data arising in diverse scientific fields such as biostatistics, neuroscience, and econometrics. 
Applications include clustering of gene-expression profiles \citep{do2006bayesian}, growth trajectories in longitudinal studies \citep{james2003clustering}, and electroencephalographic signals \citep{SuarezGhosal2016_BayesianClusteringLocalFeatures,margaritella2023bayesian}. 
A recent review on functional clustering is provided by \cite{zhang2023review}.
In such contexts, functional observations often exhibit complex structures that cannot be captured by a single homogeneous model, motivating the use of clustering techniques to group functions with similar patterns. 
Nonparametric Bayesian methods, particularly those based on the Dirichlet process~\citep[DP,][]{ferguson1973bayesian,ferguson1974prior}, are widely adopted because they simultaneously infer the number of clusters and cluster-specific parameters without the need for pre-specification \citep{Hjort_Holmes_Müller_Walker_2010,Muller2015,ghosal2017fundamentals}.

Despite their popularity, clustering methods with Dirichlet processes are well known to suffer from a tendency to overestimate the number of clusters \citep{miller2013simple,miller2018mixture}. This problem is particularly severe in functional clustering, frequently producing a large number of singleton groups \citep{rodriguez2014functional}.
To mitigate this problem, more flexible partition priors than Dirichlet processes have been proposed. For instance, \citet{rodriguez2014functional} employed the generalized Dirichlet process and \cite{rigon2023enriched} introduced a partition distribution using functional constraints, both of which aim to reduce the prior probability of partitions with many clusters to prevent degenerate allocations dominated by singletons. While existing studies have primarily focused on modifying partition priors, we argue that the more fundamental issue lies in the specification of the error distribution. In most Bayesian functional models, observational errors are assumed to be independent across discrete sampling points \citep{petrone2009hybrid,rodriguez2014functional,suarez2017bayesian}. 
This assumption is rarely justifiable in practice, especially when functions are densely observed and strong local correlations in measurement noise naturally arise \citep{ramsay2005functional,Gabrys01092010,BENHENNI201980,Gertheiss2024}. 
Moreover, functional data are known to exhibit the phenomenon of perfect classification, whereby certain covariance and mean structures can render individual trajectories almost perfectly separable \citep{delaigle2012achieving,wakayama2024fast}. 
This implies that high-dimensional functional observations carry very strong individual information, so that under the independence assumption, the accumulation of information across points may artificially force each curve into its own cluster, resulting in many singletons.
If such dependencies are ignored, posterior clustering probabilities can be distorted, leading to an overestimation of the number of clusters. 
Our conjecture is that even when flexible priors are used to penalize partitions with many clusters, the independence assumption on errors can still drive clustering towards excessive partitions. Conversely, if the correlation structure among error terms is properly modeled, clustering performance can be substantially improved even with the standard Dirichlet process prior.

The objective of this paper is twofold. 
First, we show that the above conjecture is true via theoretical analysis of how misspecified error distributional assumptions affect posterior clustering.
Specifically, we employ an asymptotic framework with an increasing number of grid points (with sample size fixed or growing slowly), which is different from the large-sample regimes for existing studies on the number of clusters of nonparametric Bayesian models for non-functional observations \citep[e.g.][]{miller2013simple,miller2014inconsistency,10.1093/biomet/asac051,Alamichel2024bayesian}.
Under the framework, we show that ignoring the error dependence leads to a larger number of clusters depending on the decay rate of the eigenvalues of the correlation function in the error term.
Second, based on the theory, we propose a practical modeling strategy that incorporates correlation in the error distribution to address the issue of misspecification.
In particular, we provide an approximate and scalable modeling approach that uses Gaussian processes and satisfies the theoretical requirements of the error distributions.
Our experiments suggest that in functional clustering, misspecification of the error structure, especially the ubiquitous independent error assumption, can be a more immediate driver of over-clustering than the partition prior itself, and appropriately modeling the error distribution is an effective remedy even with a simple Dirichlet process prior. The code is available in the publicly accessible GitHub repository: https://github.com/TomWaka/DPM-with-banded-GP-for-FDA.

This paper is organized as follows.
Section~2 introduces the problem setup and Bayesian functional clustering models.
Section~3 presents a theoretical analysis of how error misspecification affects clustering performance.
In Section~4, we develop a practical approach to address the issue discussed in Section~3. Numerical experiments are provided in Section~5. Directions for future work are presented in Section~6.
All technical proofs and details of the posterior computation algorithm are provided in the Supplementary Material.

\section{Bayesian Functional Clustering}

We formulate Bayesian functional clustering models by combining Gaussian processes with a Dirichlet process mixture model. 
Let $y_1(x),\ldots,y_n(x)$ represent $\mathbb{R}$-valued random functions on a compact domain $\mathcal{X} \subset \mathbb{R}$. We consider the following data-generating process: 
\begin{align}\label{mixture-function}
    y_i(x)\mid\mu_i(x)\sim {\rm GP}(\mu_i(x),C_y),\ \ \ \ \  \mu_i(x)= \sum_{k=1}^{K^*} \theta_k(x)I(z_i=k), 
\end{align}
for $i=1,\ldots,n$, where ${\rm GP}(\mu,C)$ denotes a Gaussian process with mean function $\mu:\mathcal{X} \to \mathbb{R}$ and covariance function $C(\cdot, \cdot):\mathcal{X}\times \mathcal{X}\to \mathbb{R}$~\citep{RasmussenWilliams2006}, $K^*$ is the unknown number of clusters, and $\theta_k(\cdot)$ is the common mean function within the $k$th cluster.
Here, $z_1,\ldots,z_n$ are unknown membership variables, and partition distributions induced from (generalized) Dirichlet process or Pitman-Yor process~\citep{pitman1997two,ishwaran2001gibbs} are typically introduced to perform posterior inference on $z_1,\ldots,z_n$. For $\theta_k(x)$, we also assume a Gaussian process, ${\rm GP}(\eta(x),C_\theta)$. Note that the mean functions and covariance functions are continuous and square integrable throughout the paper. In both the theoretical analysis and numerical experiments presented in this paper, we assume that $\eta(x) \equiv 0$. Our goal is to cluster the functions $\{\mu_i(x)\}$ into an unknown number of groups based on their latent structure, where each cluster corresponds to a distinct functional pattern modeled by a Gaussian process.

In practice, instead of the random function itself, we obtain discretized observations.
For simplicity, we assume that the function values are observed at $m$ points, denoted by $x_1,\ldots,x_m$, for all $n$ functions.  
Let $\by_i=(y_i(x_1),\ldots,y_i(x_m))$ be the $m$-dimensional vector of discretized observations. 
Then, at the data analysis stage, we work with the following model for $\by_i$:
\begin{align}\label{mixture}
&\by_i\mid \bmu_i\sim \mathcal{N}(\bmu_i,\bC_y),\ \ \  \bmu_i= \sum_{k=1}^{K_{n,m}} \bm{\theta}_k I(z_i=k), \ \ \ 
\bm{\theta}_k \sim  \mathcal{N}(\etab,\bC_{\theta}),
\end{align}
where $\mathcal{N}(\bmu, \bSigma)$ is the multivariate Gaussian distribution with mean vector $\bmu$ and variance-covariance matrix $\bSigma$, $\bmu_i=(\mu_i(x_1),\ldots,\mu_i(x_m))$, $\bth_k=(\theta_k(x_1),\ldots,\theta_k(x_m))$, $\etab=(\eta(x_1),\ldots,\eta(x_m))$, and $\bC_y$ and $\bC_{\theta}$ are $m\times m$ variance-covariance matrices. 
Here, $K_{n,m}$ is the number of distinct values (or clusters) among the $n$ observations and is an $\mathbb{N}$-valued random variable whose distribution is determined by specifying a partition process. Although the notation $K_n$ is commonly used~\citep{ghosal2017fundamentals}, we adopt $K_{n,m}$ in this paper to reflect the setting in which the number of grid points $m$ increases. For $\bz_{1:n} = (z_1,\ldots,z_n)$, we assume a joint distribution, expressed as the Chinese restaurant process, denoted by ${\rm CRP}(\alpha)$, with the concentration hyperparameter $\alpha>0$ that controls the number of clusters~\citep{Muller2015,ghosal2017fundamentals}. This prior process assigns probabilities to existing clusters in proportion to their cluster sizes and assigns a probability of creating a new cluster in proportion to $\alpha$.

Inference can be performed using Markov chain Monte Carlo (MCMC) methods, such as Gibbs sampling with collapsed or marginalized updates. 
In the MCMC algorithm, each label variable $z_i$, for $i=1,\dots,n$, is updated as follows:
\begin{align}\label{z_update}
    \pr(z_{i} = k\mid\bz_{-i},\bth_{1:K_{-i}},\by_{1:n}) \propto
    \begin{cases}
    n_{-i,k}\phi(\by_i\mid\bth_k,\bC_y) & \text{if } k=1,\ldots,K_{-i}  \\
    \alpha \phi(\by_i\mid \etab, \bC_y+\bC_\theta),  & \text{if } k=K_{-i}+1
  \end{cases},
\end{align}
where $\bz_{-i}$ denotes the vector obtained from $\bz$ by removing the $i$th element, $K_{-i}$ the number of unique values in $\bm{z}_{-i}$, $n_{-i,k}:=\#\{j\neq i:z_j=k\}$ the number of elements in $\bm{z}_{-i}$ equal to $k$, and $\phi(\cdot \mid \bmu,\bSigma)$ the probability density function of $\mathcal{N}(\bmu, \bSigma)$.

We refer to the model obtained by replacing $\bC_y$ with $\bCyp$ in \eqref{mixture} as \emph{the assumed model}. 
In practice, since $\bC_y$ in the true data-generating process is unknown, the assumed model specifies a variance-covariance structure determined by the statistician. In particular, when $\bCyp = \sigma^2 \bm{I}_m$, the assumed model is referred to as \emph{the independent-error model}. Equation \eqref{z_update} clearly illustrates that the specification of the error model exerts a direct influence on the resulting cluster assignments produced by the algorithm.

In Bayesian functional clustering, especially when the number of grid points is large, it has been widely observed that misspecification of the error model induces an overestimation of the number of clusters. A common assumption in many existing studies is the use of an independent-error model, not only for its simplicity but also due to its computational advantages \citep[see, e.g.,][]{petrone2009hybrid,rodriguez2014functional,suarez2017bayesian}. 
However, with a large number of observed points, the impact of misspecified error distribution tends to be amplified, which can lead to significant distortion in clustering outcomes.
While some previous work has addressed this issue by adopting more flexible partition models, we argue that the fundamental cause lies not in the partition structure itself but in the misspecification of the error model. 
Although the impact of misspecified error distributions on functional clustering has been widely observed in empirical studies \citep[e.g.][]{zhu2024clustering}, it has received little attention from a theoretical perspective, and few solutions have been proposed. In the following section, we take a closer look at this issue.

\section{Consequences of Misspecified Error Distributions
}

In this section, we theoretically show that when the assumed model's errors are independent, over-clustering tends to arise. We then characterize conditions under which over-clustering is unlikely when the assumed model exhibits dependent errors. 

We introduce the notation used throughout this section. For both the true model and the assumed model, we denote by $\pr^{\mathrm{true}}$ and $\pr^{\ast}$ the probabilities of updating each label $z_i$ in \eqref{z_update}:
\begin{align}
    \pr^{\mathrm{true}}(z_{i} = k\mid\bz_{-i},\bth_{1:K_{-i}},\by_{1:n}) &\propto
    \begin{cases}
    n_{-i,k}\phi(\by_i\mid\bth_k,\bC_y) & \text{if } k=1,\ldots,K_{-i}  \\
    \alpha \phi(\by_i\mid \bm{0}, \bC_y+\bC_\theta)  & \text{if } k=K_{-i}+1
  \end{cases}\label{z_update_true}, \\[1em]
  \pr^{\ast}(z_{i} = k\mid\bz_{-i},\bth_{1:K_{-i}},\by_{1:n}) &\propto
    \begin{cases}
    n_{-i,k}\phi(\by_i\mid\bth_k,\bCyp) & \text{if } k=1,\ldots,K_{-i}  \\
    \alpha \phi(\by_i\mid \bm{0}, \bCyp+\bC_\theta)  & \text{if } k=K_{-i}+1
  \end{cases}\label{z_update_assum},
\end{align}
where we can set $\etab=\bm{0}$ in \eqref{mixture} without loss of generality.

\subsection{Independent-error model}
We show that the independent-error model has a higher probability of generating a new cluster than the true model. To this end, we first provide some notation and assumptions. 

Let $\mathbb{S}^{m}_{++}$ be the set of all $m \times m$ positive definite symmetric matrices. 
For $\bm{A} \in \mathbb{S}^{m}_{++}$, let $\lambda_1(\bm{A}),\dots,\lambda_m(\bm{A})$ be the eigenvalues of $\bm{A}$, ordered such that $\lambda_1(\bm{A}) \geq \dots \geq \lambda_m(\bm{A})$. We impose the following assumptions on $\bC_y$ and $\bC_\theta$ in \eqref{mixture}.
\begin{as}[Eigenvalue decay]\label{as:ED}
    For $\bC_y$ and $\bC_\theta$, the sequences of eigenvalues satisfy $\lambda_m(\bC_y)\to 0$ and $\lambda_m(\bC_\theta)\to 0$ as $m\to\infty$.
\end{as}
Assumption~\ref{as:ED} requires that the eigenvalues of the covariance kernels of Gaussian processes converge to zero. This assumption requires the functions to be smooth. From Mercer's expansion, it follows that rough processes require many eigenfunctions, which leads to slower decay of the eigenvalues (\cite{RasmussenWilliams2006}, Section 4.3). Many standard stationary kernels satisfy this assumption. Specifically, the Mat\'ern kernel $k_{\text{Mat\'ern}}(d)= \tau^{2}\,(2^{1-\nu}) (\sqrt{2\nu}\, d/\ell)^{\nu}K_{\nu}(\sqrt{2\nu}\, d/\ell) /\Gamma(\nu)$ with distance $d = \|x -x'\|_2$ and scale $\tau^2>0$, length-scale $\ell>0$, and smoothness $\nu>0$, satisfies Assumption~\ref{as:ED} \citep{stein1999interpolation, RasmussenWilliams2006}, where $K_{\nu}$ denotes the modified Bessel function of the second kind. The Gaussian (squared-exponential) and exponential kernels are special cases of the Mat\'ern kernel.

We use standard Landau notation throughout. For functions $f,g:\mathbb{N}\to\mathbb{R}$ such that $g(n)\neq0$ for sufficiently large $n$, we write $f(n)=\omega(g(n))$ if $|f(n)/g(n)|\to\infty$ as $n\to\infty$ and $f(n)=o(g(n))$ if $|f(n)/g(n)|\to0$ as $n\to\infty$.
\begin{as}[Determinant and trace conditions]\label{as:DT}
    For $\bC_y$ and $\bC_\theta$, it holds that, for some $0\leq \gamma < 1/2$,
    \[
    \log\left\{\frac{\det(\bC_y+\bC_\theta)}{\det\bC_y}\right\} = \omega (m^{1-\gamma}),\quad \tr\left\{(\bC_\theta+\sigma^2\bm{I}_m)^{-1}(\bC_\theta+\bC_y)\right\} -\tr(\bC_y) = o(m^{1/2}).
    \]
\end{as}

Assumption~\ref{as:DT} implies that the variance structure is dominated by $\bC_\bth$, while the $\bC_y$ plays only a subsidiary role. The determinant restriction corresponds to a signal-to-noise ratio condition on the eigenvalues. For example, if $\bC_y$ and $\bC_\bth$ are simultaneously diagonalizable, the left side is $\sum_{j=1}^m \log\{1 + \lambda_j(\bC_\bth) / \lambda_j(\bC_y)\}$. The fact that this sum diverges at a polynomial rate with exponent greater than $1/2$ indicates that the signal $\bC_\bth$ dominates the noise $\bC_y$. The trace condition also concerns the relative contribution of the eigenvalues of the noise $\bC_y$ compared to those of the signal $\bC_\bth$. This condition controls the magnitude of the (whitened) $\bC_y$ and implies that the true error noise does not exhibit excessively strong long-range dependence or an overly large scale.

For the independent-error model, the following theorem holds under the assumptions.
\begin{thm}\label{thm:nonconsistent}
    Suppose that $\bC_y$ and $\bC_\theta$ in \eqref{mixture} satisfy Assumptions~\ref{as:ED} and \ref{as:DT}, every $\bth_k$ follows $\mathcal{N}(\bm{0},\bC_{\theta})$ and $\bCyp = \sigma^2 \bm{I}_m$. Then, for fixed finite $n$ and any $\bz_{-i}$, the following holds:
    \[
    \frac{\pr^{\ast} \left(z_i=K_{-i}+1
             \mid \bz_{-i},
                  \bth_{1:K_{-i}},
                  \by_{1:n}\right)}
         {\pr^{\mathrm{true}} \left(z_i=K_{-i}+1
             \mid \bz_{-i},
                  \bth_{1:K_{-i}},
                  \by_{1:n}\right)}
     \xrightarrow{ p } \infty,
    \qquad m\to\infty.
    \]
\end{thm}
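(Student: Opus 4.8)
The plan is to pass from the normalised allocation probabilities to the underlying unnormalised Gibbs weights and thereby reduce the claim to a comparison of two new‑versus‑old log‑odds. Write, for the true model, $A=\alpha\,\phi(\by_i\mid\bm{0},\bC_y+\bC_\theta)$ for the new‑cluster mass and $B=\sum_{k=1}^{K_{-i}}n_{-i,k}\,\phi(\by_i\mid\bth_k,\bC_y)$ for the total old‑cluster mass, and let $A^{\ast},B^{\ast}$ be the same quantities with $\bC_y$ replaced by $\sigma^{2}\bm{I}_m$. Then $\pr^{\mathrm{true}}(z_i=K_{-i}+1\mid\cdots)=A/(A+B)$ and $\pr^{\ast}(z_i=K_{-i}+1\mid\cdots)=A^{\ast}/(A^{\ast}+B^{\ast})$, so the ratio in the statement equals $(1+B/A)/(1+B^{\ast}/A^{\ast})$. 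Since $n$ is fixed, $K_{-i}\le n-1$ and $\sum_k n_{-i,k}=n-1$, so all sums have a bounded number of terms. I will work in the regime relevant to over‑clustering, namely with $\by_i$ drawn from \eqref{mixture} with its true label among the clusters present in $\bz_{-i}$, writing $\by_i=\bth_{k_0}+\beps$ where $\bth_{k_0}\in\{\bth_1,\dots,\bth_{K_{-i}}\}$ and $\beps\sim\mathcal{N}(\bm{0},\bC_y)$ is independent of the $\bth_k$'s; in particular $\by_i\sim\mathcal{N}(\bm{0},\bC_y+\bC_\theta)$ marginally. It then suffices to show $\log(B/A)-\log(1+B^{\ast}/A^{\ast})\xrightarrow{p}\infty$.

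\emph{Step 1: under the true model the new‑cluster odds decay at rate $\omega(m^{1-\gamma})$.} Keeping only the $k_0$ term in $B$ gives
\[
\log\frac{B}{A}\ \ge\ \log\frac{n_{-i,k_0}}{\alpha}+\frac12\log\frac{\det(\bC_y+\bC_\theta)}{\det\bC_y}-\frac12\,\beps^{\top}\bC_y^{-1}\beps+\frac12\,\by_i^{\top}(\bC_y+\bC_\theta)^{-1}\by_i .
\]
Both quadratic forms are exactly $\chi^{2}_m$‑distributed, since $\bC_y^{-1/2}\beps$ and $(\bC_y+\bC_\theta)^{-1/2}\by_i$ are standard Gaussian; hence each is $m+O_p(m^{1/2})$, they cancel up to an $O_p(m^{1/2})$ term, and by the first half of Assumption~\ref{as:DT} the deterministic part equals $\tfrac12\,\omega(m^{1-\gamma})$. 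Thus $\log(B/A)\ge\tfrac12\,\omega(m^{1-\gamma})+O_p(m^{1/2})$.

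\emph{Step 2: under the independent‑error model the new‑cluster odds decay only at rate $O_p(m^{1/2})$.} Bounding the sum in $B^{\ast}$ by its largest term,
\[
\log\frac{B^{\ast}}{A^{\ast}}\ \le\ \log\frac{n-1}{\alpha}+\max_{1\le k\le K_{-i}}\Bigl\{\tfrac12\log\det(\bm{I}_m+\sigma^{-2}\bC_\theta)-\tfrac{1}{2\sigma^{2}}\|\by_i-\bth_k\|^{2}+\tfrac12\,\by_i^{\top}(\sigma^{2}\bm{I}_m+\bC_\theta)^{-1}\by_i\Bigr\}.
\]
Here $\tfrac12\log\det(\bm{I}_m+\sigma^{-2}\bC_\theta)\le\tfrac{1}{2\sigma^{2}}\tr(\bC_\theta)$, and the crucial point is that the combination $-\tfrac{1}{2\sigma^{2}}\|\by_i-\bth_{k_0}\|^{2}+\tfrac12\,\by_i^{\top}(\sigma^{2}\bm{I}_m+\bC_\theta)^{-1}\by_i$ has expectation of order $o(m^{1/2})$: this is exactly the content of the second (trace) half of Assumption~\ref{as:DT}, which compares $\tr\{(\bC_\theta+\sigma^{2}\bm{I}_m)^{-1}(\bC_\theta+\bC_y)\}=\E{\by_i^{\top}(\sigma^{2}\bm{I}_m+\bC_\theta)^{-1}\by_i}$ to $\tr(\bC_y)=\E{\|\beps\|^{2}}$. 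One also has to check that the maximum over $k$ is attained, up to $O_p(m^{1/2})$, at $k=k_0$: for $k\ne k_0$ the term $\|\by_i-\bth_k\|^{2}$ exceeds $\|\beps\|^{2}$ by $\|\bth_{k_0}-\bth_k\|^{2}$ minus a lower‑order cross term (near‑orthogonality of independent high‑dimensional Gaussian vectors), so those terms cannot dominate. Because the quadratic forms here are not whitened, their fluctuations around their means are controlled by the Hanson--Wright inequality (or Chebyshev), using boundedness of the operator norms of $(\sigma^{2}\bm{I}_m+\bC_\theta)^{-1}$ and related matrices. Altogether $\log(1+B^{\ast}/A^{\ast})\le C+O_p(m^{1/2})$ for some constant $C$.

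\emph{Step 3 and the principal difficulty.} Subtracting, $\log(B/A)-\log(1+B^{\ast}/A^{\ast})\ge\tfrac12\,\omega(m^{1-\gamma})+O_p(m^{1/2})$, and since $\gamma<1/2$ we have $\omega(m^{1-\gamma})/m^{1/2}\to\infty$, so the right‑hand side diverges in probability, which proves the theorem. I expect Step~2 to be the main obstacle. In Step~1 the $\chi^{2}$ structure is exact and concentration is immediate; in Step~2 the quadratic forms are taken against $\sigma^{2}\bm{I}_m$ rather than the true covariance $\bC_y$, so one must identify the correct centring $\tr(\bC_y)=\E{\|\beps\|^{2}}$ (precisely where the trace condition of Assumption~\ref{as:DT} enters), supply moderate‑deviation bounds for non‑whitened quadratic forms, and verify that none of the competing terms $k\ne k_0$ overtakes the $k_0$ term in $B^{\ast}$. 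The last point is routine but genuinely needed; the first two are where the hypotheses are used most delicately.
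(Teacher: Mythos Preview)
Your strategy matches the paper's closely. Both rewrite the ratio as $(1+B/A)/(1+B^{\ast}/A^{\ast})$ with $B/A$ and $B^{\ast}/A^{\ast}$ the new-versus-old odds, and both reduce the problem to the per-cluster log-odds $L_m^{\mathrm{true}}(k)=\log\{\phi(\by_i\mid\bth_k,\bC_y)/\phi(\by_i\mid\bm{0},\bC_y+\bC_\theta)\}$ and its $\sigma^2\bm{I}_m$ analogue $L_m^{\ast}(k)$. The paper shows $L_m^{\mathrm{true}}(k)-L_m^{\ast}(k)\to\infty$, centring and controlling the four quadratic forms by Chebyshev and Hanson--Wright; your exact-$\chi^2$ observation in Step~1 is simply a sharper version of the paper's $O_p(m^{1-\gamma})$ bound. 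One expository difference: the paper writes $\by_i=\bth_k+\bxi_i$ for \emph{every} $k$ and then bounds the final ratio by $\sum_k e^{L_m^{\ast}(k)-L_m^{\mathrm{true}}(k)}$, whereas you lower-bound $B$ by the $k_0$ term alone and upper-bound $B^{\ast}$ through $\max_k L_m^{\ast}(k)$. Your version is actually more careful here, since the paper's decomposition $\by_i-\bth_k=\bxi_i$ is only literally valid at the true label $k_0$.

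There is, however, a visible gap in your Step~2 (and the paper's proof glosses over the same term). After writing $\tfrac12\log\det(\bm{I}_m+\sigma^{-2}\bC_\theta)\le\tfrac{1}{2\sigma^2}\tr(\bC_\theta)$, you conclude ``altogether $\log(1+B^{\ast}/A^{\ast})\le C+O_p(m^{1/2})$''. But Assumptions~\ref{as:ED}--\ref{as:DT} do not control $\tr(\bC_\theta)$: Assumption~\ref{as:ED} only says the smallest eigenvalue tends to zero, and the trace half of Assumption~\ref{as:DT} concerns $\tr\{(\bC_\theta+\sigma^2\bm{I}_m)^{-1}(\bC_\theta+\bC_y)\}-\tr(\bC_y)$, not $\tr(\bC_\theta)$ itself. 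Under the natural discretisation of a fixed kernel one typically has $\tr(\bC_\theta)\asymp m$, and then your Step~2 upper bound is of order $m$, which can swamp the $\omega(m^{1-\gamma})$ gain from Step~1 whenever $\gamma>0$. The fix is not to bound $L_m^{\ast}$ in isolation but to compare the two log-determinants directly: work with $L_m^{\mathrm{true}}(k_0)-L_m^{\ast}(k_0)$ and argue that $\log\{\det(\bC_y+\bC_\theta)/\det\bC_y\}-\log\det(\bm{I}_m+\sigma^{-2}\bC_\theta)$ retains the $\omega(m^{1-\gamma})$ rate. This is what the paper effectively asserts---it silently absorbs the second log-determinant into the displayed ``$\omega(m^{1-\gamma})$''---but neither you nor the paper supplies the argument. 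It goes through easily when $\bC_y$ and $\bC_\theta$ are simultaneously diagonalisable, using $\lambda_j(\bC_y)<\sigma^2$ for all but finitely many $j$ (from Assumption~\ref{as:ED}) to get a termwise inequality; the general case needs a short additional matrix-analytic step.
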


Theorem~\ref{thm:nonconsistent} shows that, under Assumptions~\ref{as:ED} and~\ref{as:DT} and as $m\to\infty$, the independent error model assigns a higher probability to generating a new cluster in the MCMC update than the true model. Intuitively, by ignoring the true correlations, the assumed covariance $\bCyp=\sigma^2 \bm{I}_m$ is smaller than $\bC_y$, making the likelihood for existing clusters $\phi(\by_i\mid\bth_k,\bCyp)$ overly concentrated around its mean and thus it is evaluated to be smaller. Consequently, the relative weight of the new-cluster term $\alpha\phi(\by_i\mid\bm0,\bCyp+\bC_\theta)$ increases, which leads to more frequent creation of new clusters.

The result suggests that in this misspecified setting, the posterior distribution of the number of clusters, $K_{n,m}$, induced by the independent-error model tends to place substantial mass on large integers ($\leq n$). For finite-dimensional data, with $n\to\infty$, it is known that Dirichlet process mixtures are inconsistent for estimating the number of clusters \citep{miller2013simple, miller2014inconsistency}. Our addition is to show, in the functional data setting with $m\to\infty$, that the independent-error misspecification leads to overestimation of the number of clusters relative to a well-specified (dependent-error) model.

To better specify when Assumptions~\ref{as:ED} and~\ref{as:DT} hold, we provide the following proposition. For functions $f,g:\mathbb{N}\to\mathbb{R}$ such that $g(n) \neq 0$ for sufficiently large $n$, we write $f(n) \gtrsim g(n)$ if there exists a constant $c > 0$ such that $c \leq |f(n)/g(n)|$ for sufficiently large $n$.
\begin{prp}\label{prp:nonconsistent}
Fix $\nu>0$ and $\kappa>0$ and assume that the eigenvalues of
$\bC_y,\bC_\theta\in\mathbb S_{++}^m$ obey $\lambda_j(\bC_y)=j^{-2\nu-1}$ and $
\lambda_j(\bC_\theta)=j^{-\kappa}$ for $j=1,2,\dots ,m$.
Let $L=\log\left\{\det(\bm{I}_m + \bS_y)\right\}$ with $\bS_y=\bC_y^{-1/2}\bC_\theta \bC_y^{-1/2}$. Then, $L$ diverges if $\kappa<2\nu+2$, and its rate is
\begin{align*}
L\gtrsim 
\begin{cases}
m\log m, & \kappa<2\nu+1,\\
m, & \kappa=2\nu+1,\\
m^{2\nu+2-\kappa}, & 2\nu+1<\kappa<2\nu+2,
\end{cases}
\end{align*}
\end{prp}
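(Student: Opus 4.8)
The plan is to collapse the matrix determinant into an explicit scalar sum by simultaneous diagonalizability, and then read off the three rates from elementary inequalities together with an integral comparison. First I would note that, since $\bC_y$ and $\bC_\theta$ share an eigenbasis (the setting used in the discussion following Assumption~\ref{as:DT}), the matrix $\bS_y=\bC_y^{-1/2}\bC_\theta\bC_y^{-1/2}$ is diagonal in that basis with eigenvalues $\lambda_j(\bC_\theta)/\lambda_j(\bC_y)=j^{-\kappa}/j^{-2\nu-1}=j^{\,2\nu+1-\kappa}$. Writing $\beta:=2\nu+1-\kappa$, this gives the closed form
\[
L=\log\det(\bm{I}_m+\bS_y)=\sum_{j=1}^{m}\log\!\left(1+j^{\beta}\right).
\]
The three cases in the statement correspond exactly to $\beta>0$ (i.e.\ $\kappa<2\nu+1$), $\beta=0$ (i.e.\ $\kappa=2\nu+1$), and $-1<\beta<0$ (i.e.\ $2\nu+1<\kappa<2\nu+2$); in all of them $\beta>-1$, and the divergence claim will follow from the lower bounds obtained below.

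Next I would dispatch the three regimes with one short estimate each. When $\beta>0$ we have $j^{\beta}\ge 1$, so $\log(1+j^{\beta})\ge\beta\log j$ and hence $L\ge\beta\log(m!)$; the bound $\log(m!)\ge\int_1^{m}\log x\,dx=m\log m-m+1\gtrsim m\log m$ then yields $L\gtrsim m\log m$. When $\beta=0$ the sum is exactly $m\log 2\gtrsim m$. When $-1<\beta<0$ we have $0\le j^{\beta}\le 1$, and since $\log(1+x)-x/2$ vanishes at $x=0$ and is nondecreasing on $[0,1]$ (its derivative is $(1-x)/\{2(1+x)\}\ge0$ there), we get $\log(1+x)\ge x/2$ on $[0,1]$, so $L\ge\tfrac12\sum_{j=1}^m j^{\beta}\ge\tfrac12\int_1^{m+1}x^{\beta}\,dx=\tfrac{1}{2(\beta+1)}\{(m+1)^{\beta+1}-1\}\gtrsim m^{\beta+1}=m^{\,2\nu+2-\kappa}$, using $\beta+1=2\nu+2-\kappa\in(0,1)$. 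Each of these lower bounds tends to infinity, which simultaneously proves that $L$ diverges for $\kappa<2\nu+2$ and establishes the three stated rates.

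There is no genuinely hard step here; the one place requiring care is the reduction from $\det(\bm{I}_m+\bS_y)$ to the scalar sum, which relies on $\bC_y$ and $\bC_\theta$ being simultaneously diagonalizable (inherited from the surrounding model specification). If desired, I would also record the matching upper bounds --- $\log(1+x)\le\log(2x)$ for $x\ge1$ and $\log(1+x)\le x$ for $x\ge0$ --- which show the stated rates are tight up to constants, so that the $\gtrsim$ may be upgraded to a two-sided bound of the same order; this is not needed for the proposition, whose role is only to exhibit concrete eigenvalue decays under which the determinant condition of Assumption~\ref{as:DT} is satisfied.
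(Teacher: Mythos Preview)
Your proof is clean and the case analysis is correct, but it rests on an assumption the proposition does not make: that $\bC_y$ and $\bC_\theta$ share an eigenbasis. The hypothesis only fixes the eigenvalue \emph{sequences}, not the eigenvectors, so in general the eigenvalues of $\bS_y=\bC_y^{-1/2}\bC_\theta\bC_y^{-1/2}$ are not $j^{\,2\nu+1-\kappa}$ and the identity $L=\sum_{j}\log(1+j^{\beta})$ need not hold. The passage you cite after Assumption~\ref{as:DT} invokes simultaneous diagonalizability only as an illustrative example, not as a standing condition, so it is not ``inherited from the surrounding model specification.''

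The paper avoids this issue by working at the level of determinants rather than individual eigenvalues. Since $\det(\bm I_m+\bS_y)=\det(\bC_y+\bC_\theta)/\det\bC_y$ always, the Minkowski determinant inequality (their Lemma~1) yields
\[
L\ge m\log\!\left(1+(\det\bC_\theta/\det\bC_y)^{1/m}\right),
\]
and the right-hand side depends only on $\prod_j j^{-\kappa}$ and $\prod_j j^{-2\nu-1}$, hence only on the eigenvalues. A Stirling estimate gives $(\det\bC_\theta/\det\bC_y)^{1/m}=(m!)^{\beta/m}\asymp m^{\beta}$ with $\beta=2\nu+1-\kappa$, and then the same three-case expansion of $m\log(1+cm^{\beta})$ that you carried out finishes the argument. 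So the fix is short: replace your exact diagonalization step by the Minkowski lower bound, and your elementary asymptotics for the three regimes goes through unchanged---now valid for arbitrary eigenbases.
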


Proposition~\ref{prp:nonconsistent} shows that Assumptions~\ref{as:ED} and~\ref{as:DT} hold over a broad region, including $1/2<\kappa<2\nu+1+\gamma$ with some $\gamma<1/2$. Hence, by Theorem~\ref{thm:nonconsistent}, as the number of grid points $m$ increases, the independent-error model encourages over-clustering. This condition is naturally met under Gaussian noise process with the Mat\'ern kernel ($\nu>0$) and a signal function whose smoothness ranges from rougher than the noise process to slightly smoother, indicating that the independent-error assumption can be precarious in many data-analytic settings.

\subsection{Dependent error model}
We next consider the case where $\bCyp$ is general and admits error dependence. We impose two assumptions and show that the assumed model can achieve clustering performance equivalent to the true model. 

\begin{as}[Spectral closeness]\label{as:SC}
    For a constant $\beta>0$, it holds that
    \[\lambda_m(\bC_y),\, \lambda_m(\bCyp) \gtrsim m^{-\beta},\qquad \|\bC_y-\bCyp\|_{\mathrm{op}}=o(m^{-1-2\beta}).\]
\end{as}
\begin{as}[Sample size]\label{as:SS}
    The sample size $n$ satisfies $n = o(\exp (m) )$.
\end{as}
Assumption~\ref{as:SC} requires that the eigenvalues decay at most at a polynomial rate, a standard condition satisfied, for example, by Mat\'ern kernels.
In contrast, the assumption on the operator norm is strong. This condition implies that the absolute errors between the eigenvalues of $\bC_y$ and $\bCyp$ decay uniformly faster than $m^{-1-2\beta}$ from Weyl's inequality (see Lemma~\ref{lem:AbEigenIneq} in the supplementary material for details). Thus, $\bC_y$ and $\bCyp$ are required to be close in the spectral sense.
Assumption~\ref{as:SS} requires the sample size to grow sub-exponentially in the observation dimension. Under this condition, the following result continues to hold even when $n\to\infty$; in particular, the finite-$n$ case is covered.

\begin{thm}\label{thm:consistent}
Assume that $\bC_y$, $\bCyp$ and $\bC_\theta$ are positive definite kernels, and $\lambda_m(\bC_\theta) \gtrsim m^{-\beta}$ for a constant $\beta>0$. 
Under Assumptions~\ref{as:SC} and \ref{as:SS}, we have
\[
\frac{\pr^{\ast} \left(z_i=K_{-i}+1
         \mid \bz_{-i},
              \bth_{1:K_{-i}},
              \by_{1:n}\right)}
     {\pr^{\mathrm{true}} \left(z_i=K_{-i}+1
         \mid \bz_{-i},
              \bth_{1:K_{-i}},
              \by_{1:n}\right)}
 \xrightarrow{ p } 1,
\qquad m\to\infty.
\]
\end{thm}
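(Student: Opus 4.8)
The plan is to rewrite the displayed ratio as a product of Gaussian likelihood ratios, to show that each factor equals $1+o_p(1)$ uniformly, and to bound the accumulation of these errors over the $K_{-i}$ existing clusters by means of Assumption~\ref{as:SS}. Write $W^{\ast}=\sum_{k=1}^{K_{-i}}n_{-i,k}\,\phi(\by_i\mid\bth_k,\bCyp)+\alpha\,\phi(\by_i\mid\bm{0},\bCyp+\bC_\theta)$ for the normalizing sum in \eqref{z_update_assum} and $W^{\mathrm{true}}$ for the one in \eqref{z_update_true}; then the quantity in the statement equals
\[
\frac{\phi(\by_i\mid\bm{0},\bCyp+\bC_\theta)}{\phi(\by_i\mid\bm{0},\bC_y+\bC_\theta)}\cdot\frac{W^{\mathrm{true}}}{W^{\ast}}.
\]
Since $W^{\ast}$ and $W^{\mathrm{true}}$ are nonnegative combinations of the corresponding Gaussian densities with the \emph{same} weights $(n_{-i,1},\dots,n_{-i,K_{-i}},\alpha)$, it is enough to show that, for each fixed $\epsilon>0$, with probability tending to one every one of the at most $K_{-i}+1\le n$ ratios $\phi(\by_i\mid\bm{v},\bm{B})/\phi(\by_i\mid\bm{v},\bm{A})$, with $(\bm{v},\bm{A},\bm{B})$ ranging over $\{(\bm{0},\bC_y+\bC_\theta,\bCyp+\bC_\theta)\}\cup\{(\bth_k,\bC_y,\bCyp):k\le K_{-i}\}$, lies in $[1-\epsilon,1+\epsilon]$; then $W^{\ast}/W^{\mathrm{true}}\in[1-\epsilon,1+\epsilon]$ as well, the displayed ratio is pinned to $[(1-\epsilon)/(1+\epsilon),(1+\epsilon)/(1-\epsilon)]$, and letting $\epsilon\downarrow0$ yields convergence to $1$ in probability. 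As in Theorem~\ref{thm:nonconsistent} I condition on $\bz_{-i}$ and $\bth_{1:K_{-i}}$, so that the only remaining randomness is that of $\by_i$.

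In each case $\bm{B}=\bm{A}+(\bCyp-\bC_y)$ and $\log\{\phi(\by_i\mid\bm{v},\bm{B})/\phi(\by_i\mid\bm{v},\bm{A})\}=-\tfrac12\log(\det\bm{B}/\det\bm{A})-\tfrac12(\by_i-\bm{v})^{\top}(\bm{B}^{-1}-\bm{A}^{-1})(\by_i-\bm{v})$. I first dispose of the deterministic log-determinant. Since $\lambda_m(\bm{A})\ge\lambda_m(\bC_y)\gtrsim m^{-\beta}$ by Assumption~\ref{as:SC}, the symmetric matrix $\bm{F}:=\bm{A}^{-1/2}(\bCyp-\bC_y)\bm{A}^{-1/2}$ obeys $\|\bm{F}\|_{\mathrm{op}}\le\|\bCyp-\bC_y\|_{\mathrm{op}}/\lambda_m(\bm{A})=o(m^{-1-\beta})$, so $|\log(\det\bm{B}/\det\bm{A})|=|\sum_{j=1}^{m}\log(1+\lambda_j(\bm{F}))|\le 2m\|\bm{F}\|_{\mathrm{op}}=o(m^{-\beta})$, uniformly over the triples. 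The same identity $\bm{B}^{-1}-\bm{A}^{-1}=\bm{B}^{-1}(\bm{A}-\bm{B})\bm{A}^{-1}$, together with $\lambda_m(\bm{B})\ge\lambda_m(\bCyp)\gtrsim m^{-\beta}$ and Assumption~\ref{as:SC}, gives $\|\bm{B}^{-1}-\bm{A}^{-1}\|_{\mathrm{op}}=o(m^{-1})$.

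For the quadratic term, under the true model $\by_i-\bm{v}$ is Gaussian with covariance $\bm{V}\in\{\bC_y,\bC_y+\bC_\theta\}$ — so $\lambda_m(\bm{V})\ge\lambda_m(\bC_y)\gtrsim m^{-\beta}$ and $\|\bm{V}\|_{\mathrm{op}}\le\|\bC_y\|_{\mathrm{op}}+\|\bC_\theta\|_{\mathrm{op}}=O(1)$, using boundedness of the operator norms of the kernel matrices (a mild regularity condition, cf.\ the normalization in Proposition~\ref{prp:nonconsistent}) — and with a mean shift $\bm{\delta}=\bmu_i-\bm{v}$ satisfying $\|\bm{\delta}\|^2=O_p(\tr(\bC_\theta))$ and concentrating exponentially. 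Writing $\by_i-\bm{v}=\bm{\delta}+\bm{V}^{1/2}\bm{g}$ with $\bm{g}\sim\mathcal{N}(\bm{0},\bm{I}_m)$, the term becomes $\bm{\delta}^{\top}(\bm{B}^{-1}-\bm{A}^{-1})\bm{\delta}+2\bm{\delta}^{\top}(\bm{B}^{-1}-\bm{A}^{-1})\bm{V}^{1/2}\bm{g}+\bm{g}^{\top}\bm{N}\bm{g}$, where $\bm{N}:=\bm{V}^{1/2}(\bm{B}^{-1}-\bm{A}^{-1})\bm{V}^{1/2}$ inherits $\|\bm{N}\|_{\mathrm{op}}=o(m^{-1})$, $|\tr(\bm{N})|\le\|\bm{B}^{-1}-\bm{A}^{-1}\|_{\mathrm{op}}\tr(\bm{V})=o(m^{-1})$, and $\|\bm{N}\|_{\mathrm{F}}^2\le m\|\bm{N}\|_{\mathrm{op}}^2=o(m^{-1})$. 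The first summand is at most $\|\bm{B}^{-1}-\bm{A}^{-1}\|_{\mathrm{op}}\|\bm{\delta}\|^2=o_p(1)$; the second, conditional on $\bm{\delta}$, is centered Gaussian with variance at most $\|\bm{V}\|_{\mathrm{op}}\|\bm{B}^{-1}-\bm{A}^{-1}\|_{\mathrm{op}}^2\|\bm{\delta}\|^2=o_p(m^{-1})$, hence $o_p(1)$ with tail $\exp(-\omega(m))$ on the high-probability event $\{\|\bm{\delta}\|^2\lesssim\tr(\bC_\theta)\}$; and the third, by the Hanson--Wright inequality, satisfies $\pr(|\bm{g}^{\top}\bm{N}\bm{g}-\tr(\bm{N})|>\epsilon)\le 2\exp(-c\min\{\epsilon^2/\|\bm{N}\|_{\mathrm{F}}^2,\epsilon/\|\bm{N}\|_{\mathrm{op}}\})=2\exp(-\omega(m))$ for an absolute $c>0$. (When $n$ stays bounded, Chebyshev's inequality with $\var{\bm{g}^{\top}\bm{N}\bm{g}}=2\|\bm{N}\|_{\mathrm{F}}^2=o(1)$ already closes the argument.)

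It remains to union-bound over the at most $K_{-i}+1\le n$ triples. By Assumption~\ref{as:SS}, $n\le\exp(m)$ for large $m$, so the probability that some triple fails is at most $2n\exp(-\omega(m))\le 2\exp(m-\omega(m))\to 0$; on the complementary event $|\log\{\phi(\by_i\mid\bm{v},\bm{B})/\phi(\by_i\mid\bm{v},\bm{A})\}|\le\epsilon$ for every triple once $m$ is large, so each ratio lies in $[1-\epsilon,1+\epsilon]$, and the reduction of the first paragraph gives the theorem. The main obstacle is quantitative rather than structural: the quadratic-form fluctuation must be improbable at the exponential scale $\exp(-\omega(m))$, far beyond a mere $o_p(1)$ statement, so that the union bound can survive $n=o(\exp m)$ clusters; this is exactly what forces the sharp rate $\|\bC_y-\bCyp\|_{\mathrm{op}}=o(m^{-1-2\beta})$ of Assumption~\ref{as:SC}, which must absorb two powers of the $O(m^{\beta})$ inverse least eigenvalue together with the dimension factor $m$ in $\|\bm{N}\|_{\mathrm{F}}^2\le m\|\bm{N}\|_{\mathrm{op}}^2$ while still beating $\log n$, which may be of order $m$.
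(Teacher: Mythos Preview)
Your argument is correct and tracks the paper's proof closely: both reduce to showing that each density ratio $\phi(\by_i\mid\bm{v},\bm{B})/\phi(\by_i\mid\bm{v},\bm{A})\to 1$ uniformly over $k\le K_{-i}$, handle the log-determinant via the eigenvalue perturbation implied by Assumption~\ref{as:SC}, obtain $\|\bm{B}^{-1}-\bm{A}^{-1}\|_{\mathrm{op}}=o(m^{-1})$ from the resolvent identity, and use Assumption~\ref{as:SS} to absorb the union bound over $K_{-i}\le n$ terms. The paper's treatment of the quadratic form is simpler than your mean-shift/Hanson--Wright decomposition: it bounds $|(\by_i-\bm{v})^\top(\bm{B}^{-1}-\bm{A}^{-1})(\by_i-\bm{v})|\le\|\bm{B}^{-1}-\bm{A}^{-1}\|_{\mathrm{op}}\,\|\by_i-\bm{v}\|^2$ directly and then applies a Laurent--Massart $\chi^2$ tail to obtain $\sup_k\|\by_i-\bth_k\|^2=O_p(m+\sqrt{m\log n}+\log n)=O_p(m)$ under $n=o(\exp m)$, which already yields the needed exponential slack without tracking $\tr(\bm{N})$ or $\|\bm{N}\|_{\mathrm{F}}$ separately. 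One small slip in your write-up: since $\tr(\bm{V})=O(m)$ for kernel matrices with bounded diagonal, your bound on $|\tr(\bm{N})|$ is only $o(1)$, not $o(m^{-1})$; this does not affect the conclusion.
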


Theorem~\ref{thm:consistent} shows that, under Assumptions~\ref{as:SC} and~\ref{as:SS} and as $m\to\infty$, the full conditional probability of creating a new cluster under the assumed model coincides with that of the true model. In other words, unlike in Theorem~\ref{thm:nonconsistent}, the assumed model does not generate new clusters more frequently than the true model within the MCMC algorithm, thereby preventing the inflation of clustering errors arising from the misspecification.

This theorem suggests that, in practice, it is crucial to construct $\bCyp$ to be spectrally close to the true $\bC_y$. As the true $\bC_y$ is unknown, we must estimate the sequence of eigenvalues, that is, the smoothness of the true function.

\section{Remedy: Correlated Errors and Scalable Implementation}

In this section, guided by Theorem~\ref{thm:consistent}, we seek a remedy that preserves theoretical guarantees while remaining computationally efficient. 
A straightforward solution is to model the error term with a Gaussian process, for instance using a Mat\'ern kernel. However, this could be computationally expensive when $m$ is large, since inverting the $m\times m$ covariance matrix typically requires $O(m^3)$ operations. Another option is to place an inverse-Wishart prior on the covariance matrix, under which the error covariance can be estimated at the $n^{-1/2}$ rate via the sample covariance~\citep{Koltchinskii2015Concentration}, thereby satisfying Assumption~\ref{as:SC}. Yet, it suffers from the same cubic computational burden as Gaussian processes and the estimation error could be considerable when the observation dimension $m$ exceeds the sample size $n$.

To reduce computational cost and provide stable estimation, we consider a parameter-reduction approach via banded matrices proposed in \cite{Lee2023post}.
A matrix $\bSigma=(\sigma_{ij}) \in\mathbb S_{++}^m$ is banded with bandwidth $r$ if $\sigma_{ij}=0$ whenever $|i-j|>r$ for all $i,j=1,\dots,m$. 
This structure reduces the memory cost from $O(m^2)$ to $O(mr)$, and banded Cholesky factorization enables efficient matrix inversion in $O(mr^2)$ operations~\citep{GolubVanLoan2013}.
For the true covariance matrix $\bC_y$, let $\bC_{\mathrm{approx}}$ be its banded-matrix approximation.
The following result justifies the use of $\bC_{\mathrm{approx}}$ in the correlated-error model when the bandwidth parameter is appropriately chosen. 

\begin{prp}\label{prp:band-approx}
If $C_y$ is the Mat\'ern kernel and the bandwidth is $r = \omega((1+2\beta)\log m)$, then we have $\|\bC_y-\bC_{\mathrm{approx}}\|_{\mathrm{op}} =o(m^{-1-2\beta})$.
\end{prp}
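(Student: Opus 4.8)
The plan is to bound the operator norm of the error matrix $\bm{E}:=\bC_y-\bC_{\mathrm{approx}}$ by a one-dimensional tail estimate and then exploit the exponential decay of the Matérn kernel. By construction $E_{ij}$ equals $(\bC_y)_{ij}$ when $|i-j|>r$ and vanishes otherwise, and $\bm{E}$ is symmetric. Since the spectral radius of a symmetric matrix is bounded by its maximum absolute row sum (Gershgorin), it suffices to bound, uniformly in $i$, the off-band mass in a single row of $\bC_y$:
\[
\|\bm{E}\|_{\mathrm{op}} \;\le\; \max_{1\le i\le m}\ \sum_{j:\,|i-j|>r}\bigl|(\bC_y)_{ij}\bigr|.
\]

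The key analytic input is an exponential tail bound for the Matérn kernel: there exist constants $C,c>0$, depending only on $(\tau,\ell,\nu)$, with $C_y(x,x')\le C e^{-c\|x-x'\|_2}$ for all $x,x'\in\mathcal{X}$. This follows from the large-argument asymptotics $K_\nu(z)\sim\sqrt{\pi/(2z)}\,e^{-z}$, which make the kernel decay like $d^{\nu-1/2}e^{-\sqrt{2\nu}\,d/\ell}$ as $d\to\infty$; the exponential dominates the polynomial prefactor, so any $c<\sqrt{2\nu}/\ell$ works for large $d$, while the kernel is bounded (equal to $\tau^2$ at $d=0$) on the remaining compact range, which fixes $C$. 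Writing $\delta>0$ for the minimum grid spacing (equivalently, measuring the length-scale in units of the grid spacing, so $|x_i-x_j|\ge\delta|i-j|$), this gives $|(\bC_y)_{ij}|\le C e^{-c\delta|i-j|}$.

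A geometric series then closes the argument: for each $i$,
\[
\sum_{j:\,|i-j|>r}\bigl|(\bC_y)_{ij}\bigr| \;\le\; 2\sum_{l=r+1}^{\infty}C e^{-c\delta l} \;=\; \frac{2C\,e^{-c\delta(r+1)}}{1-e^{-c\delta}} \;\le\; C'e^{-c\delta r},\qquad C':=\frac{2C}{1-e^{-c\delta}},
\]
hence $\|\bm{E}\|_{\mathrm{op}}\le C'e^{-c\delta r}$. The hypothesis $r=\omega((1+2\beta)\log m)$ means $r/\log m\to\infty$, so $c\delta r-(1+2\beta)\log m\to+\infty$, and therefore $\|\bm{E}\|_{\mathrm{op}}\,m^{1+2\beta}\le C'\exp\{-(c\delta r-(1+2\beta)\log m)\}\to 0$; that is, $\|\bC_y-\bC_{\mathrm{approx}}\|_{\mathrm{op}}=o(m^{-1-2\beta})$.

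The only nontrivial step is the Matérn tail bound, where the Bessel asymptotics, the polynomial prefactor $d^{\nu-1/2}$, and the finite behavior near $d=0$ must be combined into a clean global estimate $C e^{-cd}$; this is standard from the literature on the Matérn family but is the substantive ingredient. A caveat worth recording is the grid geometry: the displayed bandwidth $r=\omega((1+2\beta)\log m)$ is the correct order precisely when grid points stay separated by a fixed $\delta>0$ (length-scale measured relative to the spacing); under a shrinking spacing one would rescale $r$ accordingly. Finally, one may note via Weyl's inequality that this bound, together with the eigenvalue lower bound in Assumption~\ref{as:SC}, forces $\bC_{\mathrm{approx}}$ to be positive definite for all sufficiently large $m$, so that it is a legitimate covariance in the correlated-error model.
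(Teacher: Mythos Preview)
Your proof is correct and follows the same skeleton as the paper's: bound $\|\bC_y-\bC_{\mathrm{approx}}\|_{\mathrm{op}}$ by the maximum absolute row sum of the off-band entries, then control that tail via the exponential decay of the Mat\'ern kernel. The only substantive difference is in how the tail sum is handled: the paper retains the asymptotic form $h^{\nu-1/2}e^{-h}$ and bounds the resulting incomplete-gamma-type integral by an integration-by-parts lemma (their Lemma~\ref{lem:band}), whereas you absorb the polynomial prefactor into a global bound $Ce^{-cd}$ and reduce to a geometric series. Your route is slightly more elementary and avoids the auxiliary lemma; the paper's route gives a sharper constant but is not needed for the $o(m^{-1-2\beta})$ conclusion. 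Your explicit caveat about the grid spacing $\delta$ (the paper implicitly works with unit spacing, i.e.\ length-scale measured in grid units) and your closing remark on positive definiteness via Weyl's inequality are both welcome additions that the paper does not spell out.
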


The above result shows that $\bC_{\rm approx}$ satisfies Assumption~\ref{as:SC} if we set the bandwidth parameter $r$ greater than $(1+2\beta)\log m$.
Since the computational cost associated with $\bC_{\rm approx}$ increases with $r$, the above proposition provides a suitable rate for $r$ to balance the computational cost and theoretical justification. 
Moreover, under Assumption~\ref{as:SS}, the estimation error of $\bC_{\rm approx}$ can be negligible compared to the approximation error in Assumption~\ref{as:SC} \citep[See e.g.][]{Lee2023post}.
Then, we adopt the Gaussian process model with $\bC_{\mathrm{approx}}$ as a practical remedy for functional clustering.

We implement the proposed approach in Gibbs sampler that maintains the cluster indicators, the cluster-specific mean vectors, and a banded working covariance matrix for the observational noise. 
At each step, observations are reassigned using the CRP full conditional
\[
\pr(z_i=k\mid\cdot)\propto
\begin{cases}
n_{-i,k}\,\phi_m(\by_i\mid\bth_k,\bC_{\rm approx}), & k=1,\ldots, K_{-i},\\
        \alpha\,\phi_m(\by_i\mid\bm 0,\,\bC_{\rm approx}+\bC_\theta), & k=K_{-i}+1,
\end{cases}
\]
where $\bC_\theta$ is a banded or non-banded version of the correlation matrix.
Given labels, each cluster mean admits a closed-form Gaussian update.
Mat\'ern smoothness and length-scale hyperparameters are updated via a Metropolis–Hastings algorithm~\citep{gelman2013bayesian}; scale parameters admit inverse-gamma updates. All linear algebra exploits banded Cholesky factorizations. Choosing the bandwidth $r$ to be on the order of $(1+2\beta)\log m$ mitigates over-generation of clusters and, as a result, avoids increases in the costs of CRP-based assignments and cluster-mean updates. In this sense, the proposed bandwidth selection may yield a shorter total runtime than the independent-error model, where the Gaussian-process computation itself is small. (See the following experiments.)

\section{Simulation}
We conduct simulation experiments to examine how the specification of the error distribution, in particular the independence assumption on discretized observations, affects posterior clustering under the functional mixture model introduced in Section~2. Our design varies the grid density, the correlation structure of the observation noise, and the prior on partitions. 

\subsection{Simulation setup}
We generate $n$ functional observations on a common grid $\{x_j\}_{j=1}^m \subset [0,1]$, where $m$ represents the grid density. Let $K^*=2$ be the true number of clusters and $z_i \in \{1,\ldots,K^*\}$ the latent membership. 
First, we independently sample $\theta_k(\cdot)$ from a Gaussian process ${\rm GP}\left(\eta(\cdot),\, C_\theta(\cdot,\cdot)\right)$ for $k=1,\ldots,K^*$ as in \eqref{mixture}.
Then, conditional on the cluster-specific mean functions $\{\theta_k(\cdot)\}_{k=1}^{K^*}$, the $i$th function is generated as $y_i(x) = \theta_{z_i}(x) + \varepsilon_i(x)$, $x \in \mathcal{X}=[0,1]$, where $\varepsilon_i(\cdot)$ follows a mean-zero Gaussian process. Discretizing at $\{x_j\}_{j=1}^m$ yields
\[
    \by_i=\bth_{z_i}+\beps_i,\qquad  \beps_i\sim \mathcal{N}\!\left(\mathbf{0},\,\bC_y \right),
\]
where $\by_i=(y_i(x_1),\ldots,y_i(x_m))^\top$ and $\bth_{k}=(\theta_k(x_1),\ldots,\theta_k(x_m))^\top$. We set $\eta(\cdot)\equiv 0$ and take $C_\theta$ to be the Gaussian kernel
$C_\theta(x,x')=\tau_\theta^2\exp\!\left\{-(x-x')^2/2\ell_\theta^2\right\}$, with $(\tau_\theta^2,\ell_\theta)=(1,\,0.15)$. Cluster sizes are balanced so that $n_k=n/K^*$ for $k=1,\ldots,K^*$. We fix $n=80$ and vary $m$ over moderate-to-large regimes; specifically, $m\in\{8,16,32,64\}$. The grid is equispaced on $[0,1]$.
To probe the impact of misspecifying the noise, we consider the following families for $\bC_y$:
\begin{enumerate}
\item \emph{Independent and identically distributed (IID)}: $\bC_y = \sigma_\varepsilon^2 \bm{I}_m$. This scenario serves as a baseline where the independence assumption is correctly specified.

\item \emph{Exponential kernel correlated (Exp)}: The covariance is governed by an exponential kernel, $(\bC_y)_{jk} = \sigma_\varepsilon^2 \exp(-|x_j - x_k|/\ell)$. We test two different correlation lengths, $\ell \in \{0.1, 1.0\}$. This structure represents a common form of smooth dependence.

\item \emph{Fractional Brownian motion correlated (fBm)}: The covariance structure is derived from a fractional Brownian motion process, defined by $(\bC_y)_{jk} = \sigma_\varepsilon^2(|x_j|^{2H} + |x_k|^{2H} - |x_j - x_k|^{2H})/2$. We consider two Hurst parameters, $H \in \{0.25, 0.5\}$. These values correspond to processes with rough sample paths ($H=0.25$) and standard Brownian motion ($H=0.5$), respectively.
\end{enumerate}
In other words, we generate noise under five different smoothness conditions: IID, Exp with two levels of smoothness, and fBm with two levels of smoothness. We fix the noise scale parameter $\sigma_\varepsilon^2$ at $0.05$ here and vary it in the Supplementary Materials.

We generate $50$ different datasets and analyze each of them using the following five Bayesian clustering procedures that differ in the partition prior and/or the working model for the error covariance:
\begin{enumerate}
\item \textbf{DP+IID}: This is the canonical Bayesian nonparametric clustering model. It pairs a Chinese Restaurant Process prior, ${\rm CRP}(\alpha)$, for the partition $(z_1,\ldots,z_n)$ with the independent-error model ($\bCyp=\sigma^2 \bm{I}_m$). 
\item \textbf{PY+IID}: This method replaces the Dirichlet process prior (inducing CRP) with a more flexible Pitman--Yor process prior, ${\rm PY}(\delta, \alpha)$, characterized by a discount parameter $\delta \in [0,1)$ and a concentration parameter $\alpha > -\delta$. The Pitman--Yor process can generate partitions with a power-law behavior in cluster sizes, offering more prior control over the number of clusters. The likelihood follows the independent-error model.
\item \textbf{DP+GP}: This approach substitutes $\bCyp$ in DP+IID by Mat\'ern kernel; namely the error process is modeled as a Gaussian process with a Mat\'ern kernel. The smoothness, length-scale and scale parameters are selected using MH algorithm.
\item \textbf{PY+GP}: This variant combines ${\rm PY}(\delta,\alpha)$ with a dependent-error working model. Specifically, the noise follows a Gaussian process with a Mat\'ern kernel, whose smoothness and length-scale are updated as in DP+GP. 
\item \textbf{band}: From DP+GP, we retain the CRP prior for the partition but replace a Mat\'ern kernel with a banded one, introduced in Section~4. Concretely, we construct a banded estimator $\widehat{\bC}_y$ of the noise covariance with bandwidth $r=\min (\lceil 3\log m\rceil, m-1)$ and use the assumed likelihood $\beps_i\sim\mathcal{N}(\mathbf{0},\widehat{\bC}_y)$.
\end{enumerate}
DP+IID model serves as a baseline, representing the standard approach which is deliberately misspecified under our Exp and fBm data-generating designs. PY+IID allows us to investigate whether a more flexible partition prior alone can mitigate the over-clustering issue when the observational error is misspecified.
All methods share the same Gaussian prior for $\bth_k$ as in~\eqref{mixture}, ensuring that differences arise from the treatment of the error structure and the partition prior.

Regarding the hyperparameters, we fix the partition parameters at $\alpha=1$ and $\delta=0.1$ (discount parameter used only in Pitman--Yor based methods) here and vary them in the Supplementary Material. For the assumed noise covariance in GP-based methods, we place a discrete uniform prior on the Mat\'ern smoothness
$\nu \in \{1/2,\ 3/2,\ 5/2,\ \infty\}$, where $\nu=\infty$ denotes the Gaussian kernel. Conditional on $\nu$, the length-scale parameters $(\ell_y,\ell_\mu)$ follow flat priors on the log-scale within bounds $\ell\in[0.01,10]$ tied to the grid diameter; in computation, proposals are made by random walks on $\log\ell$ with reflecting bounds. For the scale parameters of the noise and mean-process GPs, we adopt weakly informative inverse-gamma priors, $\tau_y^2,\tau_\mu^2 \sim {\rm Inv\text{-}Gamma}(1,1)$, which yield conjugate updates.

We summarize the posterior over partitions by the point estimate that minimizes the posterior expected variation of information \citep{wade2018bayesian,rastelli2018optimal}. Let $K_z$ denote the number of non-empty clusters induced by a draw of $Z$ and we report $E[K_z\mid \by_{1:n}]$. We evaluate clustering performance by the adjusted Rand index (ARI) \citep{hubert1985comparing}. Both metrics measure the agreement between the true and predicted cluster assignments, where higher values of ARI reflect better clustering accuracy. Furthermore, the accuracy of the estimated mean function for each sample is evaluated by the root mean squared error (RMSE):
\[
    {\rm RMSE}_\theta=\frac{1}{n}\sum_{i=1}^{n}\left\{\frac{1}{m}\sum_{j=1}^m\bigl(\hat{\theta}_{\hat{z}_i}(x_j)-\theta_{z_i}(x_j)\bigr)^2\right\}^{1/2}.
\]
All criteria are averaged over $50$ replicates for each $m$ and noise design.

Finally, all computations were implemented in Python~3.11. 
For the runtime results in Table~\ref{tab:computation_time}, experiments ran on macOS~14.1 (Apple silicon; 12-core CPU) with 36 GB unified memory.
We used 3{,}000 MCMC samples after 2{,}000 burn-in (no thinning). Package versions and sample code are provided in the Supplementary Material.

\subsection{Results}
First, in Figure~\ref{fig:number}, we examine the behavior of the number of clusters to justify the theoretical analysis in Section~3. 
Under the IID noise setting, where the working model is correctly specified, all methods accurately recover the true number of clusters ($K=2$), with the posterior distributions being tightly concentrated.
In contrast, when there is dependence among the noise terms (Exp and fBm settings), methods that assume IID noise (DP+IID and PY+IID) consistently and severely overestimate the number of clusters. The overestimation of DP (or CRP), proven in Theorem~\ref{thm:nonconsistent}, cannot be remedied by simply using more flexible partition priors (e.g. Pitman--Yor process). 
In contrast, methods that assume the error correlation (DP+GP, PY+GP and band) are robust to the underlying noise structure. They can adjust the kernel smoothness and hence consistently estimate the number of clusters to be near the true value of $2$ across all dimensions and noise types. This result supports Theorem~\ref{thm:consistent}, suggesting that as long as the working covariance is close to the true noise kernel, the number of clusters is not overestimated.

Because the overestimation of cluster counts directly translates into poor estimation performance of clusters and the underlying mean functions, we next investigate them through ARI and RMSE. From Figure~\ref{fig:ari}, for the correlated noise scenarios, the ARIs for the DP+IID and PY+IID methods are consistently smaller than $1$. Conversely, the DP+GP, PY+GP and band methods achieve values near 1, reflecting their success in recovering the true latent group structure.
Figure~\ref{fig:rmse} shows that the failure of the IID-based methods also propagated to the estimation of the cluster-mean functions. By failing to group similar curves, these methods could not leverage the ``borrowing strength'' phenomenon, where information is pooled across cluster members. Consequently, their RMSE for the mean functions is significantly higher than that of the GP-based methods. The latter, by correctly identifying the clusters, produced highly accurate estimates of the two true underlying mean functions.

\begin{figure}
    \includegraphics[width=\linewidth]{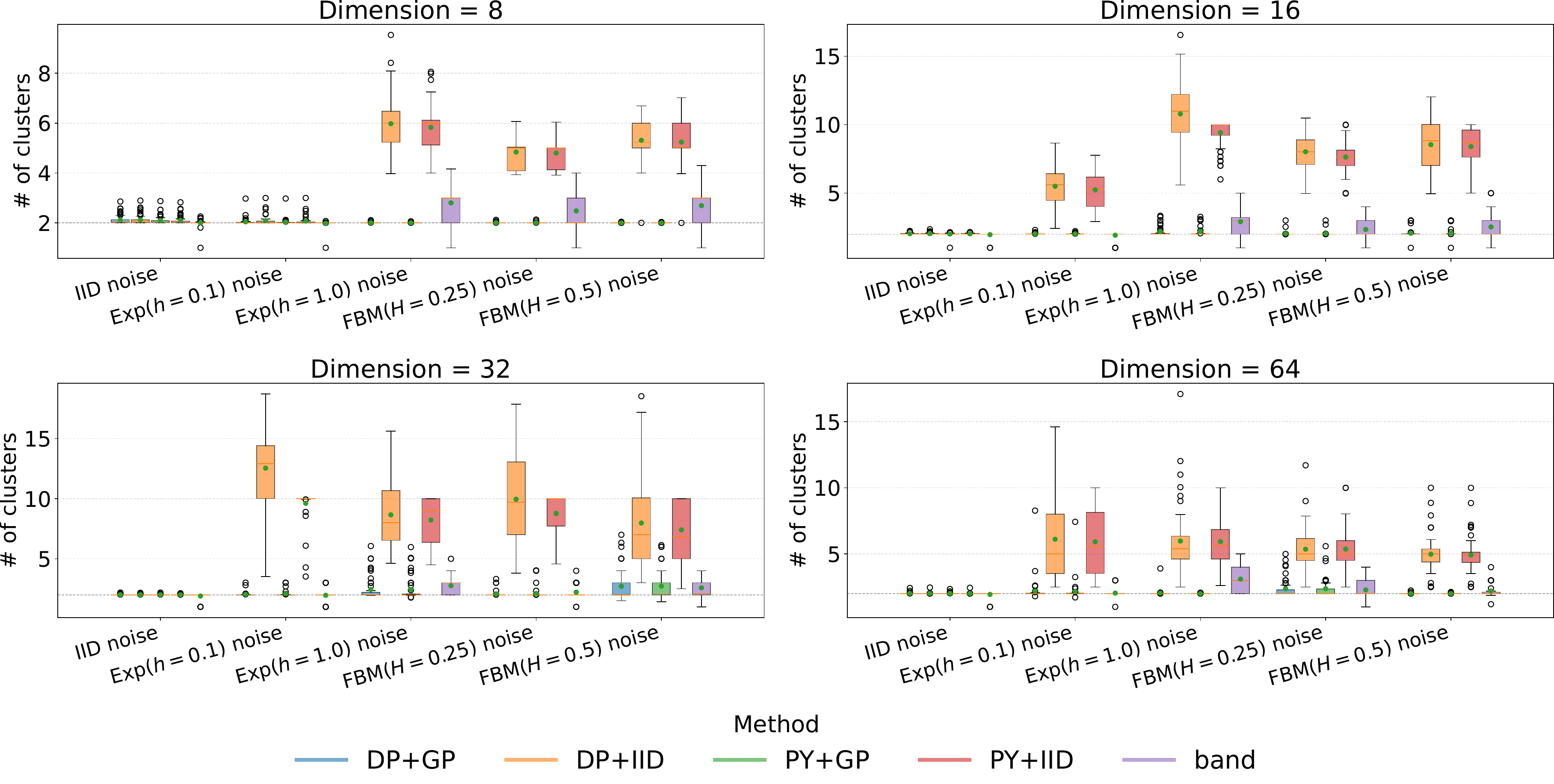}
    \caption{Posterior number of clusters for grid sizes $m\in\{8,16,32,64\}$ under five noise designs: independent errors; exponential kernel with correlation length $0.1$; exponential kernel with correlation length $1.0$; fractional Brownian motion with Hurst parameter $0.25$; fractional Brownian motion with Hurst parameter $0.5$. Methods: Dirichlet process with independent errors; Pitman--Yor process with independent errors; Dirichlet process with Gaussian process error (Mat\'ern kernel); Pitman--Yor process with Gaussian process error (Mat\'ern kernel); banded covariance model. The true number of clusters is $2$.}
    \label{fig:number}
\end{figure}

\begin{figure}
    \centering
    \includegraphics[width=\linewidth]{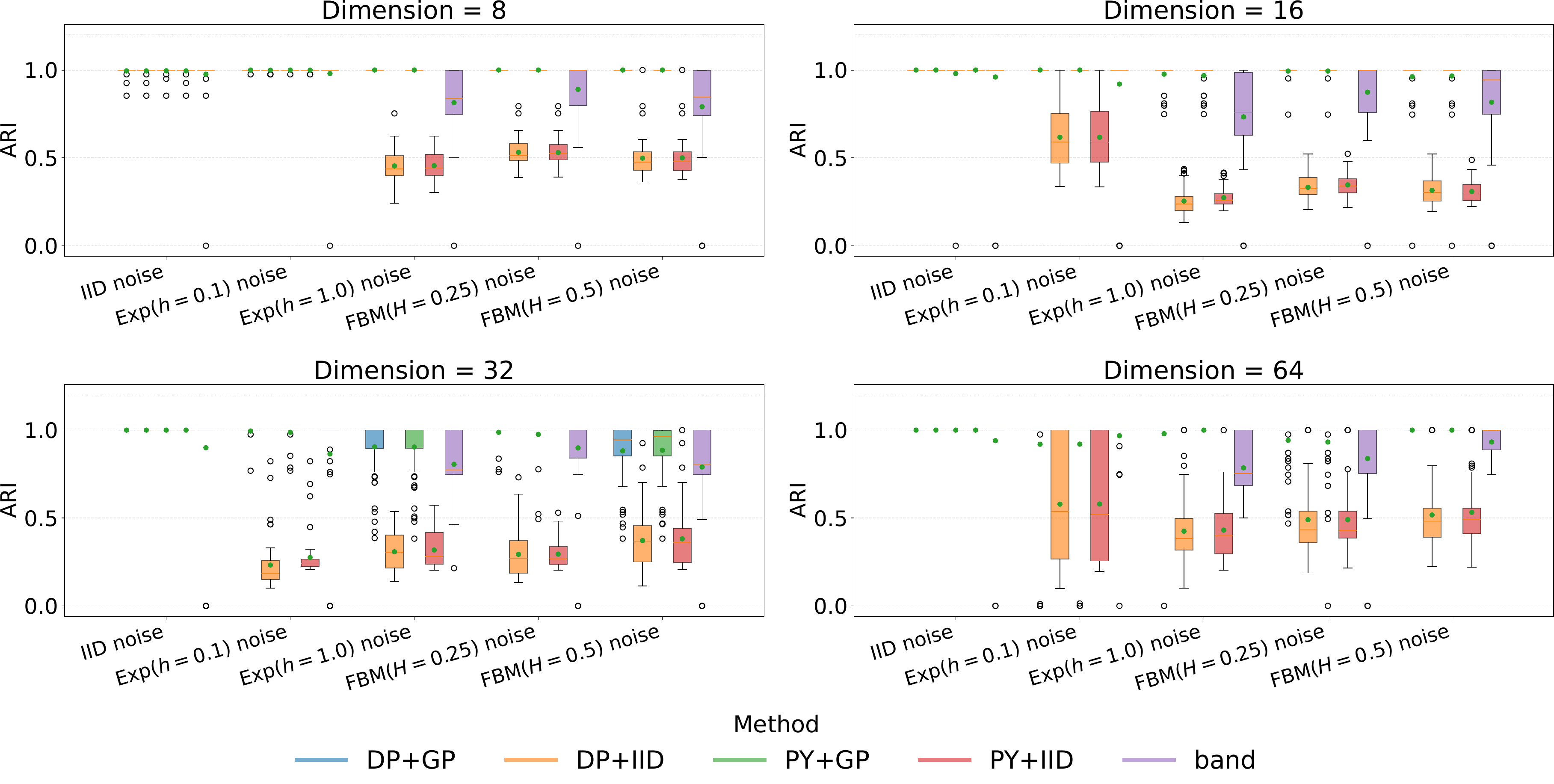}
    \caption{Adjusted Rand index (ARI) for grid sizes $m\in\{8,16,32,64\}$ across the same five noise designs and five clustering methods.}
    \label{fig:ari}
\end{figure}

\begin{figure}
    \centering
    \includegraphics[width=\linewidth]{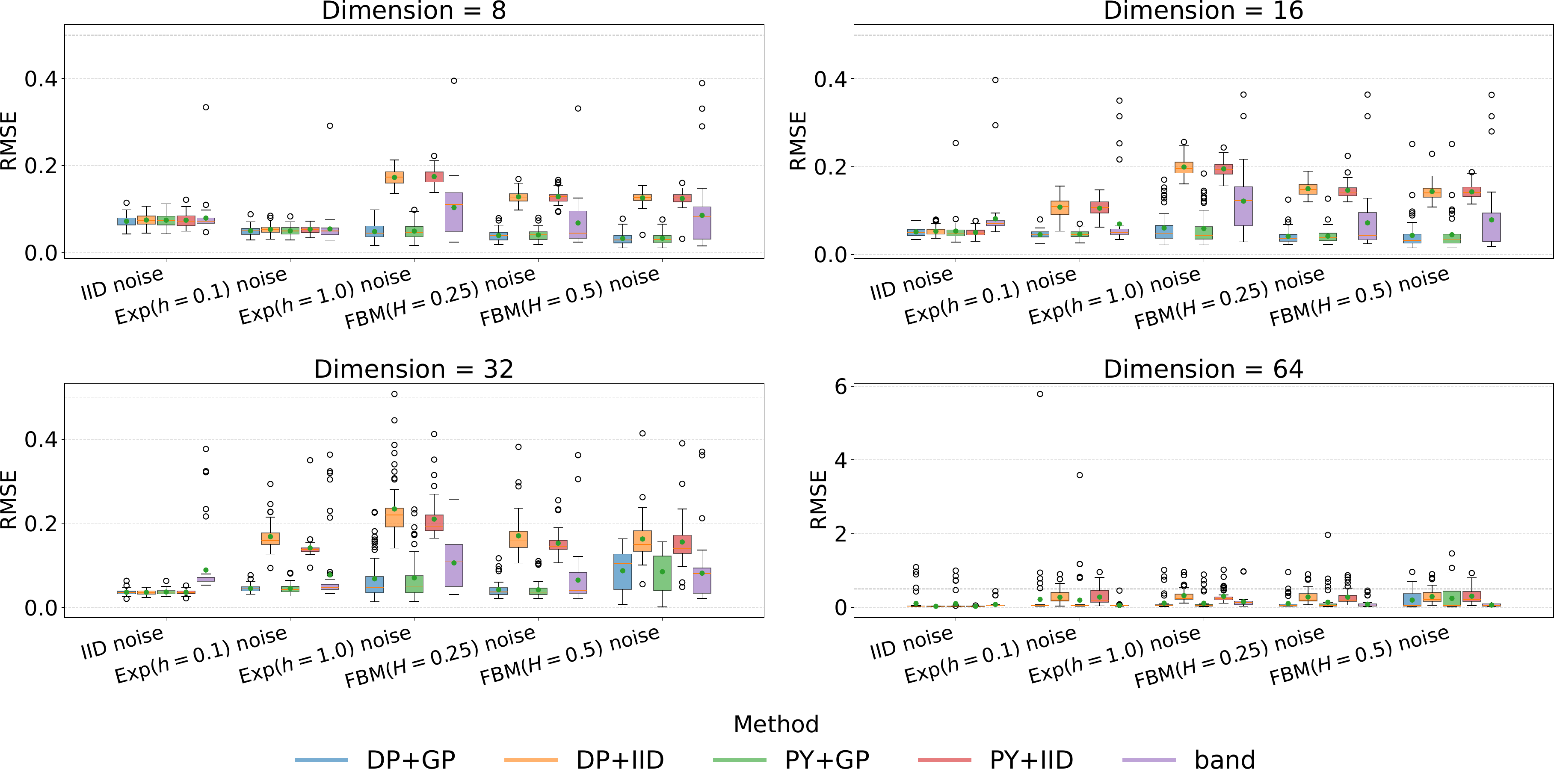}
    \caption{Root mean squared error (RMSE) of the estimated mean functions for grid sizes $m\in\{8,16,32,64\}$ across the five noise designs and five clustering methods.}
    \label{fig:rmse}
\end{figure}

We benchmark wall-clock time per MCMC run for each method across dimension $m\in\{8,16,32,64\}$, averaging over $50$ replicates. Table~\ref{tab:computation_time} reports the mean time (seconds) with standard deviations. For small grids ($m\le 16$), all four dense-kernel likelihoods (DP+GP, PY+GP, DP+IID, PY+IID) have similar runtime, while the banded likelihood incurs an overhead from estimating and applying $\widehat{\bC}_y$, making it slower at $m=8$ and $m=16$. As $m$ grows, the picture reverses: at $m=32$ and $m=64$ the band variant is the fastest by a wide margin (for example, $22.75$ seconds compared with approximately $31$ to $33$ seconds at $m=32$, and $30.70$ seconds compared with approximately $49$ to $53$ seconds at $m=64$). This is consistent with the computational benefit of exploiting banded structure in the working covariance, which scales more favorably than dense-kernel approaches as the grid densifies. Another notable point is that the IID model does not yield a substantially shorter total computation time than the GP-based models because it tends to overproduce clusters. Although computations involving covariance matrices should be considerably cheaper, generating many clusters implies that the set of candidates considered in CRP-based assignments grows and the cost of updating cluster means scales; consequently, the total computation time increases. Therefore, even with the goal of reducing overall runtime, the choice of bandwidth in the banded approach is crucial.

\begin{table}[]
\caption{Average computation time (in seconds) for different methods across varying dimension, averaged over $50$ data sets; standard deviations are in parentheses.}
\label{tab:computation_time}
\begin{center}
\begin{tabular}{lcccc}
\hline
\multirow{2}{*}{Method} & \multicolumn{4}{c}{Dimension ($m$)} \\
\cline{2-5}
 & 8 & 16 & 32 & 64 \\
\hline
DP+IID & 16.69 (0.54) & 18.49 (1.29) & 33.43 (3.83) & 49.58 (8.30) \\
PY+IID & 16.63 (0.53) & 18.35 (1.14) & 33.15 (3.31) & 49.53 (7.29) \\
DP+GP  & 16.47 (0.07) & 17.40 (0.27) & 31.31 (2.45) & 52.71 (5.40) \\
PY+GP  & 16.51 (0.08) & 17.46 (0.30) & 31.45 (2.53) & 53.16 (6.03) \\
band   & 19.02 (0.37) & 20.16 (0.60) & 22.75 (1.13) & 30.70 (2.22) \\
\hline
\end{tabular}
\end{center}
\end{table}

In summary, our simulation study clearly shows that assuming independent errors for densely observed functional data can be catastrophic for Bayesian functional clustering, leading to model failure across multiple performance indices. Appropriately modeling the error correlation structure resolves these issues with better accuracy and computation, even with a standard Dirichlet process prior.


\section{Discussion}

Our theoretical analysis primarily focuses on the behavior of the full conditional posterior probability of cluster assignment. This perspective provides valuable insights into how differences in error distributions affect clustering performance, and the proposed remedies based on the theory show promising results in numerical experiments.
However, it would be desirable to develop more direct theoretical evaluations of clustering performance. 
For example, establishing the consistency of the number of clusters would be a possible way, as considered in \cite{miller2013simple,miller2014inconsistency,10.1093/biomet/asac051} outside the functional data framework.
Hence, investigating whether partition distributions such as the Dirichlet process yield consistent clustering results under large samples when using correlated error distributions would be an interesting direction for future work.

\section*{Acknowledgements}
This work is partially supported by the Japan Society for the Promotion of Science (JSPS) KAKENHI grant numbers 24K21420, 25H00546 and 25K07131,  the Japan Science and Technology Agency (JST) ACT-X grant number JPMJAX23CS, and JST SPRING grant number JPMJSP2132.

\vspace{1cm}
\bibliographystyle{chicago}
\bibliography{ref}

\begin{thebibliography}{}

\bibitem[\protect\citeauthoryear{Alamichel, Bystrova, Arbel, and Kon Kam~King}{Alamichel et~al.}{2024}]{Alamichel2024bayesian}
Alamichel, L., D.~Bystrova, J.~Arbel, and G.~Kon Kam~King (2024).
\newblock Bayesian mixture models (in)consistency for the number of clusters.
\newblock {\em Scandinavian Journal of Statistics\/}~{\em 51\/}(4), 1619--1660.

\bibitem[\protect\citeauthoryear{Ascolani, Lijoi, Rebaudo, and Zanella}{Ascolani et~al.}{2022}]{10.1093/biomet/asac051}
Ascolani, F., A.~Lijoi, G.~Rebaudo, and G.~Zanella (2022, 09).
\newblock Clustering consistency with dirichlet process mixtures.
\newblock {\em Biometrika\/}~{\em 110\/}(2), 551--558.

\bibitem[\protect\citeauthoryear{Benhenni, Hassan, and Su}{Benhenni et~al.}{2019}]{BENHENNI201980}
Benhenni, K., A.~H. Hassan, and Y.~Su (2019).
\newblock {Local polynomial estimation of regression operators from functional data with correlated errors}.
\newblock {\em Journal of Multivariate Analysis\/}~{\em 170}, 80--94.
\newblock Special Issue on Functional Data Analysis and Related Topics.

\bibitem[\protect\citeauthoryear{Bhatia}{Bhatia}{1997}]{bhatia1997matrix}
Bhatia, R. (1997).
\newblock {\em Matrix Analysis}, Volume 169 of {\em Graduate Texts in Mathematics}.
\newblock New York: Springer.

\bibitem[\protect\citeauthoryear{Delaigle and Hall}{Delaigle and Hall}{2011}]{delaigle2012achieving}
Delaigle, A. and P.~Hall (2011).
\newblock {Achieving near Perfect Classification for Functional Data}.
\newblock {\em Journal of the Royal Statistical Society Series B: Statistical Methodology\/}~{\em 74\/}(2), 267--286.

\bibitem[\protect\citeauthoryear{Do, M{\"u}ller, and Vannucci}{Do et~al.}{2006}]{do2006bayesian}
Do, K.-A., P.~M{\"u}ller, and M.~Vannucci (2006).
\newblock {\em Bayesian inference for gene expression and proteomics}.
\newblock Cambridge University Press.

\bibitem[\protect\citeauthoryear{Ferguson}{Ferguson}{1973}]{ferguson1973bayesian}
Ferguson, T.~S. (1973).
\newblock {A Bayesian Analysis of Some Nonparametric Problems}.
\newblock {\em The Annals of Statistics\/}~{\em 1\/}(2), 209 -- 230.

\bibitem[\protect\citeauthoryear{Ferguson}{Ferguson}{1974}]{ferguson1974prior}
Ferguson, T.~S. (1974).
\newblock {Prior distributions on spaces of probability measures}.
\newblock {\em The Annals of Statistics\/}~{\em 2\/}(4), 615--629.

\bibitem[\protect\citeauthoryear{Gabrys, Horváth, and Kokoszka}{Gabrys et~al.}{2010}]{Gabrys01092010}
Gabrys, R., L.~Horváth, and P.~Kokoszka (2010).
\newblock {Tests for Error Correlation in the Functional Linear Model}.
\newblock {\em Journal of the American Statistical Association\/}~{\em 105\/}(491), 1113--1125.

\bibitem[\protect\citeauthoryear{Gelman, Carlin, Stern, Dunson, Vehtari, and Rubin}{Gelman et~al.}{2013}]{gelman2013bayesian}
Gelman, A., J.~B. Carlin, H.~S. Stern, D.~B. Dunson, A.~Vehtari, and D.~B. Rubin (2013).
\newblock {\em Bayesian Data Analysis\/} (3 ed.).
\newblock Chapman and Hall/CRC.

\bibitem[\protect\citeauthoryear{Gertheiss, Rügamer, Liew, and Greven}{Gertheiss et~al.}{2024}]{Gertheiss2024}
Gertheiss, J., D.~Rügamer, B.~X.~W. Liew, and S.~Greven (2024).
\newblock {Functional Data Analysis: An Introduction and Recent Developments}.
\newblock {\em Biometrical Journal\/}~{\em 66\/}(1), e202300363.

\bibitem[\protect\citeauthoryear{Ghosal and van~der Vaart}{Ghosal and van~der Vaart}{2017}]{ghosal2017fundamentals}
Ghosal, S. and A.~van~der Vaart (2017).
\newblock {\em {Fundamentals of Nonparametric Bayesian Inference}}, Volume~44.
\newblock Cambridge University Press.

\bibitem[\protect\citeauthoryear{Golub and Loan}{Golub and Loan}{2013}]{GolubVanLoan2013}
Golub, G.~H. and C.~F.~V. Loan (2013).
\newblock {\em Matrix Computations\/} (4th ed.).
\newblock Johns Hopkins University Press.

\bibitem[\protect\citeauthoryear{Hjort, Holmes, M{\"u}ller, and Walker}{Hjort et~al.}{2010}]{Hjort_Holmes_Müller_Walker_2010}
Hjort, N.~L., C.~Holmes, P.~M{\"u}ller, and S.~G. Walker (2010).
\newblock {\em Bayesian Nonparametrics}.
\newblock Cambridge Series in Statistical and Probabilistic Mathematics. Cambridge University Press.

\bibitem[\protect\citeauthoryear{Hubert and Arabie}{Hubert and Arabie}{1985}]{hubert1985comparing}
Hubert, L. and P.~Arabie (1985).
\newblock Comparing partitions.
\newblock {\em {Journal of Classification}\/}~{\em 2}, 193--218.

\bibitem[\protect\citeauthoryear{Ishwaran and James}{Ishwaran and James}{2001}]{ishwaran2001gibbs}
Ishwaran, H. and L.~F. James (2001).
\newblock {Gibbs Sampling Methods for Stick-Breaking Priors}.
\newblock {\em {Journal of the American Statistical Association}\/}~{\em 96\/}(453), 161--173.

\bibitem[\protect\citeauthoryear{James and Sugar}{James and Sugar}{2003}]{james2003clustering}
James, G.~M. and C.~A. Sugar (2003).
\newblock Clustering for sparsely sampled functional data.
\newblock {\em Journal of the American Statistical Association\/}~{\em 98\/}(462), 397--408.

\bibitem[\protect\citeauthoryear{Koltchinskii and Lounici}{Koltchinskii and Lounici}{2017}]{Koltchinskii2015Concentration}
Koltchinskii, V. and K.~Lounici (2017).
\newblock {Concentration inequalities and moment bounds for sample covariance operators}.
\newblock {\em Bernoulli\/}~{\em 23\/}(1), 110 -- 133.

\bibitem[\protect\citeauthoryear{Lee, Lee, and Lee}{Lee et~al.}{2023}]{Lee2023post}
Lee, K., K.~Lee, and J.~Lee (2023).
\newblock {Post-Processed Posteriors for Banded Covariances}.
\newblock {\em Bayesian Analysis\/}~{\em 18\/}(3), 1017--1040.

\bibitem[\protect\citeauthoryear{Manning, Sch{\"u}tze, and Raghavan}{Manning et~al.}{2009}]{manning2009introduction}
Manning, C.~D., H.~Sch{\"u}tze, and P.~Raghavan (2009).
\newblock {\em {Introduction to Information Retrieval}}.
\newblock Cambridge University Press.

\bibitem[\protect\citeauthoryear{Margaritella, In{\'a}cio, and King}{Margaritella et~al.}{2023}]{margaritella2023bayesian}
Margaritella, N., V.~In{\'a}cio, and R.~King (2023).
\newblock A bayesian functional pca model with multilevel partition priors for group studies in neuroscience.
\newblock {\em arXiv preprint arXiv:2312.16739\/}.

\bibitem[\protect\citeauthoryear{Miller and Harrison}{Miller and Harrison}{2013}]{miller2013simple}
Miller, J.~W. and M.~T. Harrison (2013).
\newblock A simple example of dirichlet process mixture inconsistency for the number of components.
\newblock In C.~Burges, L.~Bottou, M.~Welling, Z.~Ghahramani, and K.~Weinberger (Eds.), {\em Advances in Neural Information Processing Systems}, Volume~26. Curran Associates, Inc.

\bibitem[\protect\citeauthoryear{Miller and Harrison}{Miller and Harrison}{2014}]{miller2014inconsistency}
Miller, J.~W. and M.~T. Harrison (2014).
\newblock Inconsistency of pitman-yor process mixtures for the number of components.
\newblock {\em The Journal of Machine Learning Research\/}~{\em 15\/}(1), 3333--3370.

\bibitem[\protect\citeauthoryear{Miller and Harrison}{Miller and Harrison}{2018}]{miller2018mixture}
Miller, J.~W. and M.~T. Harrison (2018).
\newblock Mixture models with a prior on the number of components.
\newblock {\em Journal of the American Statistical Association\/}~{\em 113\/}(521), 340--356.

\bibitem[\protect\citeauthoryear{M{\"u}ller, Quintana, Jara, and Hanson}{M{\"u}ller et~al.}{2015}]{Muller2015}
M{\"u}ller, P., F.~A. Quintana, A.~Jara, and T.~Hanson (2015).
\newblock {\em {Bayesian Nonparametric Data Analysis}}.
\newblock Springer International Publishing.

\bibitem[\protect\citeauthoryear{Petrone, Guindani, and Gelfand}{Petrone et~al.}{2009}]{petrone2009hybrid}
Petrone, S., M.~Guindani, and A.~E. Gelfand (2009, 06).
\newblock {Hybrid Dirichlet Mixture Models for Functional Data}.
\newblock {\em Journal of the Royal Statistical Society Series B: Statistical Methodology\/}~{\em 71\/}(4), 755--782.

\bibitem[\protect\citeauthoryear{Pitman and Yor}{Pitman and Yor}{1997}]{pitman1997two}
Pitman, J. and M.~Yor (1997).
\newblock {The two-parameter Poisson-Dirichlet distribution derived from a stable subordinator}.
\newblock {\em The Annals of Probability\/}~{\em 25\/}(2), 855 -- 900.

\bibitem[\protect\citeauthoryear{Ramsay and Silverman}{Ramsay and Silverman}{2005}]{ramsay2005functional}
Ramsay, J.~O. and B.~W. Silverman (2005).
\newblock {\em Functional data analysis}.
\newblock Springer.

\bibitem[\protect\citeauthoryear{Rasmussen and Williams}{Rasmussen and Williams}{2006}]{RasmussenWilliams2006}
Rasmussen, C.~E. and C.~K.~I. Williams (2006).
\newblock {\em Gaussian Processes for Machine Learning}.
\newblock Cambridge, MA: MIT Press.

\bibitem[\protect\citeauthoryear{Rastelli and Friel}{Rastelli and Friel}{2018}]{rastelli2018optimal}
Rastelli, R. and N.~Friel (2018).
\newblock {Optimal Bayesian estimators for latent variable cluster models}.
\newblock {\em Statistics and Computing\/}~{\em 28}, 1169--1186.

\bibitem[\protect\citeauthoryear{Rigon}{Rigon}{2023}]{rigon2023enriched}
Rigon, T. (2023).
\newblock An enriched mixture model for functional clustering.
\newblock {\em Applied Stochastic Models in Business and Industry\/}~{\em 39\/}(2), 232--250.

\bibitem[\protect\citeauthoryear{Rodriguez and Dunson}{Rodriguez and Dunson}{2014}]{rodriguez2014functional}
Rodriguez, A. and D.~B. Dunson (2014).
\newblock {Functional clustering in nested designs: Modeling variability in reproductive epidemiology studies}.
\newblock {\em The Annals of Applied Statistics\/}~{\em 8\/}(3), 1416 -- 1442.

\bibitem[\protect\citeauthoryear{Rudelson and Vershynin}{Rudelson and Vershynin}{2013}]{rudelson2013hanson}
Rudelson, M. and R.~Vershynin (2013).
\newblock Hanson-wright inequality and sub-gaussian concentration.

\bibitem[\protect\citeauthoryear{Stein}{Stein}{1999}]{stein1999interpolation}
Stein, M.~L. (1999).
\newblock {\em Interpolation of Spatial Data}.
\newblock Springer Series in Statistics. Springer New York, NY.

\bibitem[\protect\citeauthoryear{Suarez and Ghosal}{Suarez and Ghosal}{2016}]{SuarezGhosal2016_BayesianClusteringLocalFeatures}
Suarez, A.~J. and S.~Ghosal (2016).
\newblock {Bayesian Clustering of Functional Data Using Local Features}.
\newblock {\em Bayesian Analysis\/}~{\em 11\/}(1), 71--98.

\bibitem[\protect\citeauthoryear{Suarez and Ghosal}{Suarez and Ghosal}{2017}]{suarez2017bayesian}
Suarez, A.~J. and S.~Ghosal (2017).
\newblock {Bayesian Estimation of Principal Components for Functional Data}.
\newblock {\em Bayesian Analysis\/}~{\em 12\/}(2), 311--333.

\bibitem[\protect\citeauthoryear{Wade and Ghahramani}{Wade and Ghahramani}{2018}]{wade2018bayesian}
Wade, S. and Z.~Ghahramani (2018).
\newblock {Bayesian cluster analysis: point estimation and credible balls (with Discussion)}.
\newblock {\em Bayesian Analysis\/}~{\em 13\/}(2), 559--626.

\bibitem[\protect\citeauthoryear{Wakayama and Imaizumi}{Wakayama and Imaizumi}{2024}]{wakayama2024fast}
Wakayama, T. and M.~Imaizumi (2024).
\newblock {Fast Convergence on Perfect Classification for Functional Data}.
\newblock {\em Statistica Sinica\/}~{\em 34}, 1801--1819.

\bibitem[\protect\citeauthoryear{Zhang and Parnell}{Zhang and Parnell}{2023}]{zhang2023review}
Zhang, M. and A.~Parnell (2023).
\newblock Review of clustering methods for functional data.
\newblock {\em ACM Transactions on Knowledge Discovery from Data\/}~{\em 17\/}(7), 1--34.

\bibitem[\protect\citeauthoryear{Zhu, Xue, Tekwe, Diaz, Benden, and Zoh}{Zhu et~al.}{2024}]{zhu2024clustering}
Zhu, T., L.~Xue, C.~Tekwe, K.~Diaz, M.~Benden, and R.~Zoh (2024).
\newblock Clustering functional data with measurement errors: A simulation-based approach.
\newblock {\em Statistics in medicine\/}~{\em 43\/}(28), 5344--5352.

\end{thebibliography}

\newpage
\setcounter{equation}{0}
\renewcommand{\theequation}{S\arabic{equation}}
\setcounter{section}{0}
\renewcommand{\thesection}{S\arabic{section}}
\setcounter{table}{0}
\renewcommand{\thetable}{S\arabic{table}}
\setcounter{figure}{0}
\renewcommand{\thefigure}{S\arabic{figure}}
\setcounter{lem}{0}
\renewcommand{\thelem}{S\arabic{lem}}

\setcounter{page}{1}

\begin{center}
{\LARGE {\bf 
Supplementary Materials of ``On Misspecified Error Distributions in Bayesian Functional Clustering: Consequences and Remedies''
}}
\end{center}

\bigskip
In the Supplementary Materials, we provide the details of the posterior computation algorithms and proofs of the main theorems. 

\section{Posterior Computation}

This section presents the complete Gibbs sampler that accompanies the full‑conditional derivations of the main text. The sampler combines a CRP update for the cluster indicators with conjugate updates for the remaining parameters, and incorporates a band-matrix post-processing step to ensure that the sampled error covariance $\bC_y$ remains computationally manageable. 
We provide pseudocode (Algorithm~1) for posterior computation. The remaining methods (DP+IID, PY+IID, DP+GP, PY+GP) can all be implemented by making only minor, local modifications to this pseudocode. Concretely: (i) to switch between a Dirichlet process prior and a Pitman--Yor process prior, replace the CRP seating probabilities in step~(a) by their PY counterparts (i.e., adjust the existing–cluster weights from $n_{-i,k}$ to $n_{-i,k}-\delta$ and the new–cluster weight from $\alpha$ to $\alpha+\delta K^{\setminus i}$); (ii) to move between IID and GP error models, change only the working covariance in the likelihood evaluations and linear solves, replacing $\bC_y=\sigma^2 \bm{I}_m$ by a kernel matrix (Mat\'ern or Gaussian) or vice versa; and (iii) to avoid the banded approximation, remove the operator $\mathrm{band}_r(\cdot)$. No other parts of the sampler need to be altered, so a single code path suffices with these modular switches for the partition prior and the error covariance.

\begin{algorithm}[]\label{algo}
\caption{CRP-Gibbs sampler with banded covariance}
\begin{algorithmic}[1]
\Require $\{ \by_i \}_{i=1}^n \subset \mathbb R^m$, iterations $T$,
         concentration $\alpha>-\delta$, discount $\delta\in[0,1)$,
         kernels $k_y(\,\cdot\,;\phi_y),\,k_\mu(\,\cdot\,;\phi_\mu)$,
         band width $r$,
         IG priors $(a_y,b_y)$ and $(a_\mu,b_\mu)$ for scales $\tau_y^2,\tau_\mu^2$
\Statex \textbf{Notation:}
  $\bC_y\equiv \mathrm{band}_r\!\big(k_y(\phi_y)\big)$,\;
  $\bC_\mu\equiv \mathrm{band}_r\!\big(k_\mu(\phi_\mu)\big)$.
\Statex \textbf{Initialisation:}
        $K\gets1$;\,
        $z_i\gets1$ $(i=1,\ldots,n)$;\,
        $\bm\theta_1\sim\mathcal N(\bm0,\bC_\theta)$;\,
        $\bC_y\gets\sigma^2\bm{I}_m$ with
        $\sigma^2\sim\operatorname{IG}(a_0,b_0)$
\For{$t=1$ {\bf to} $T$}
  \Statex\hspace*{-0.3em}\textbf{(a) Update cluster indicators $z_i$}
  \For{$i=1$ {\bf to} $n$}
    \State Remove $\by_i$ from its current cluster; update counts
           $n_{-i,k}$
    \For{each existing cluster $k=1,\dots,K^{\setminus i}$}
        \State
        $p_k\gets n_{-i,k}\,
                \mathcal N_m\!\bigl(\by_i\mid\bm\theta_k,\bC_y\bigr)$
    \EndFor
    \State $p_{\text{new}}\gets
           \alpha\,
           \mathcal N_m\!\bigl(\by_i\mid\bm0,\bC_y+\bC_\theta\bigr)$
    \State Draw $z_i$ from the Categorical distribution over
           $\{1,\dots,K^{\setminus i},\text{new}\}\propto
           \{p_1,\dots,p_{K^{\setminus i}},p_{\text{new}}\}$
    \If{$z_i$ is {\rm new}}
        \State $K\gets K+1$;\,
               $\bm\theta_{K}\sim\mathcal N(\bm0,\bC_\theta)$
    \EndIf
  \EndFor
  \Statex\hspace*{-0.3em}\textbf{(b) Update cluster means $\bm\theta_k$}
  \For{$k=1$ {\bf to} $K$}
     \State $n_k\gets\#\{i:z_i=k\}$,\;
           $\bm y_{\!k}^{\text{sum}}\gets\sum_{i:z_i=k}\by_i$
     \State $\bm{V}_k\gets\bigl(\bC_\theta^{-1}+n_k \bC_y^{-1}\bigr)^{-1}$, \; $\bm m_k\gets \bm{V}_k\,\bC_y^{-1}\bm y_{\!k}^{\text{sum}}$
     \State $\bm\theta_k\sim\mathcal N_m\bigl(\bm m_k,\bm{V}_k\bigr)$
  \EndFor
  \Statex\hspace*{-0.3em}\textbf{(c) Update scale parameters $\tau_y^2,\tau_\mu^2$}
  \State Residuals: for each $i$, $\br_i\gets \by_i-\bmu_{z_i}$
  \State Compute quadratic forms via banded solves:
         $Q_y \gets \tau_y^2 \sum_{i=1}^n \br_i^\top \bC_y^{-1}\br_i,\quad Q_\mu \gets \tau_\mu^2 \sum_{k=1}^K \bmu_k^\top \bC_\mu^{-1}\bmu_k$
  \State Sample $\tau_y^2 \sim \mathrm{IG}\!\big(a_y+\tfrac{nm}{2},\ b_y+\tfrac{Q_y}{2}\big)$ and $\tau_\mu^2 \sim \mathrm{IG}\!\big(a_\mu+\tfrac{Km}{2},b_\mu+\tfrac{Q_\mu}{2}\big)$
  
  \Statex\hspace*{-0.3em}\textbf{(d) Update kernel hyperparameters (MH)}
  \State Propose $\phi_y'\sim q_y(\cdot\mid\phi_y)$; form $\bC_y'\gets \mathrm{band}_r\!\big(k_y(\phi_y')\big)$
  \State Compute acceptance ratio: $\log A_y=\log p\big(\{ \by_i \}\mid \bmu_{1:K},\tau_y^2 \bC_y'\big) - \log p\big(\{ \by_i \}\mid \bmu_{1:K},\tau_y^2 \bC_y\big)+ \log p(\phi_y')-\log p(\phi_y)+\log q_y(\phi_y\!\mid\!\phi_y')-\log q_y(\phi_y'\!\mid\!\phi_y)$.
  \State Accept with probability $\min(1,e^{\log A_y})$.
  \State Repeat analogously for $\phi_\mu$ to update $\bC_\mu$ and caches.
  
  \Statex Store $\bigl(z_{1:n},\bm\theta_{1:K},\bC_y\bigr)$
\EndFor 
\end{algorithmic}
\end{algorithm}

When using banded approximations, the bandwidth should be linked to the spectral-closeness requirement in Assumption~(A1). In particular, choosing a bandwidth $r = \Omega((1+2\beta)\log m)$ ensures that the operator-norm approximation error satisfies $\|\bC_y - \mathrm{band}_r(\bC_y)\|_{\mathrm{op}} = o(m^{-1-2\beta})$, so the banding step itself respects the rate demanded by Assumption~(A1). Consequently, the theoretical guarantees used in the main text (notably the comparison in Theorem~\ref{thm:consistent}) continue to hold under this choice. In practice, the banded matrix may have a slightly negative smallest eigenvalue. To guarantee positive definiteness without altering the rates, apply a minimal diagonal shift: if $\lambda_{\min}$ denotes the smallest eigenvalue of $\mathrm{band}_r(\bC_y)$, set $\bC_y^{\mathrm{pd}} := \mathrm{band}_r(\bC_y) + (\varepsilon - \lambda_{\min}) \bm{I}_m$ with a tiny $\varepsilon > 0$. This correction enforces positive definiteness while leaving the asymptotic conditions unchanged.

\section{Additional Simulation}
In this section, we present two additional simulation studies to further substantiate the conclusions drawn in the main text. These experiments are designed to test the robustness of our findings under different noise levels and alternative prior specifications for the partition. For a more comprehensive evaluation of clustering performance, we also include the Purity Function (PF) \citep{manning2009introduction} in addition to the ARI.
Unless otherwise noted below, all generative and prior settings follow the baseline described in Section~4.1.

First, we examine the impact of a lower signal-to-noise ratio (SNR). To this end, we increase the error variance to $\sigma_y^2=0.1$, which doubles the value used in the main text. This setting provides a more challenging scenario for all methods, making it more difficult to distinguish the true underlying cluster structure from the observation noise. 
The results are illustrated in Figure~\ref{fig:all_experiment2}.
Reducing the SNR consistently degrades the performance of the IID-error models (DP+IID, PY+IID): they overestimate the number of clusters, ARI and PF decrease, and RMSE increases. The degradation intensifies as SNR decreases and persists across noise structures (short- and long-range exponential, fractional Brownian) and grid sizes. By contrast, the models assuming dependency (DP+GP, PY+GP, and the banded approximation) remain close to the true cluster count and achieve higher ARI and PF with lower RMSE. These findings support the theoretical claim that ignoring correlated noise raises the probability of spurious clusters, particularly under low SNR conditions, whereas modeling the error structure prevents such fragmentation.

\begin{sidewaysfigure} 
    \centering
    \includegraphics[width=0.49\linewidth]{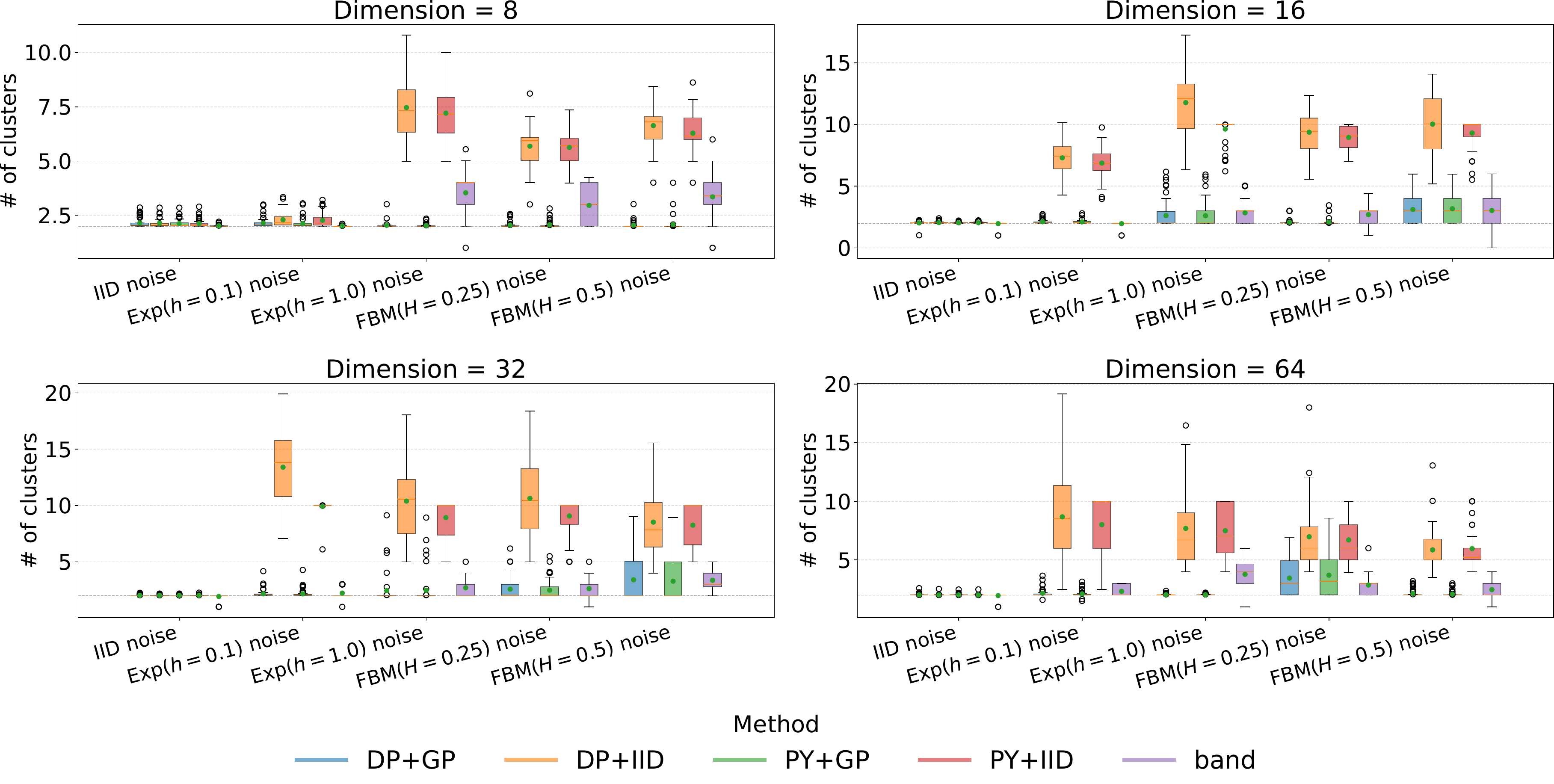}
    \includegraphics[width=0.49\linewidth]{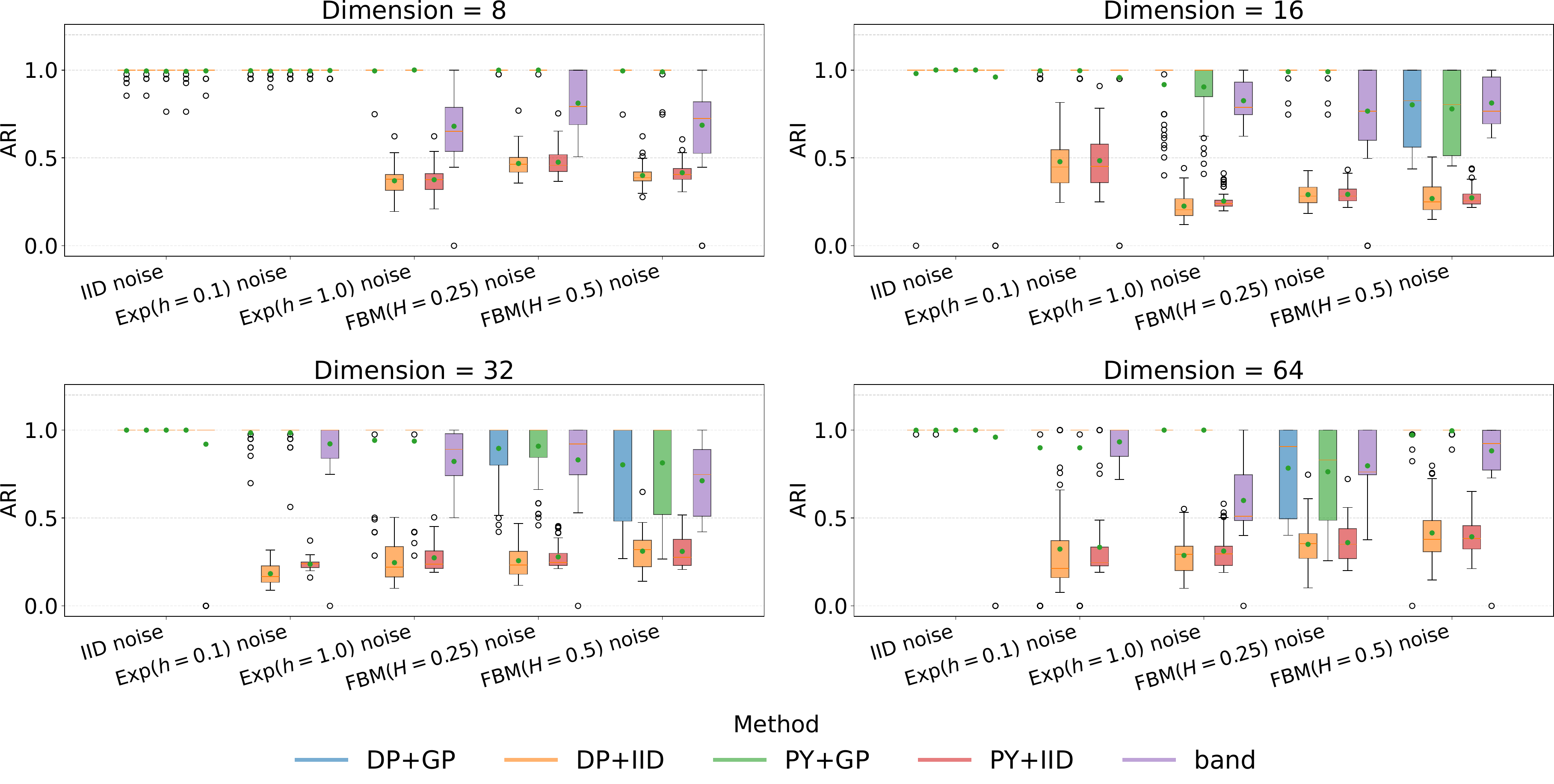}\\[1ex]
    \includegraphics[width=0.49\linewidth]{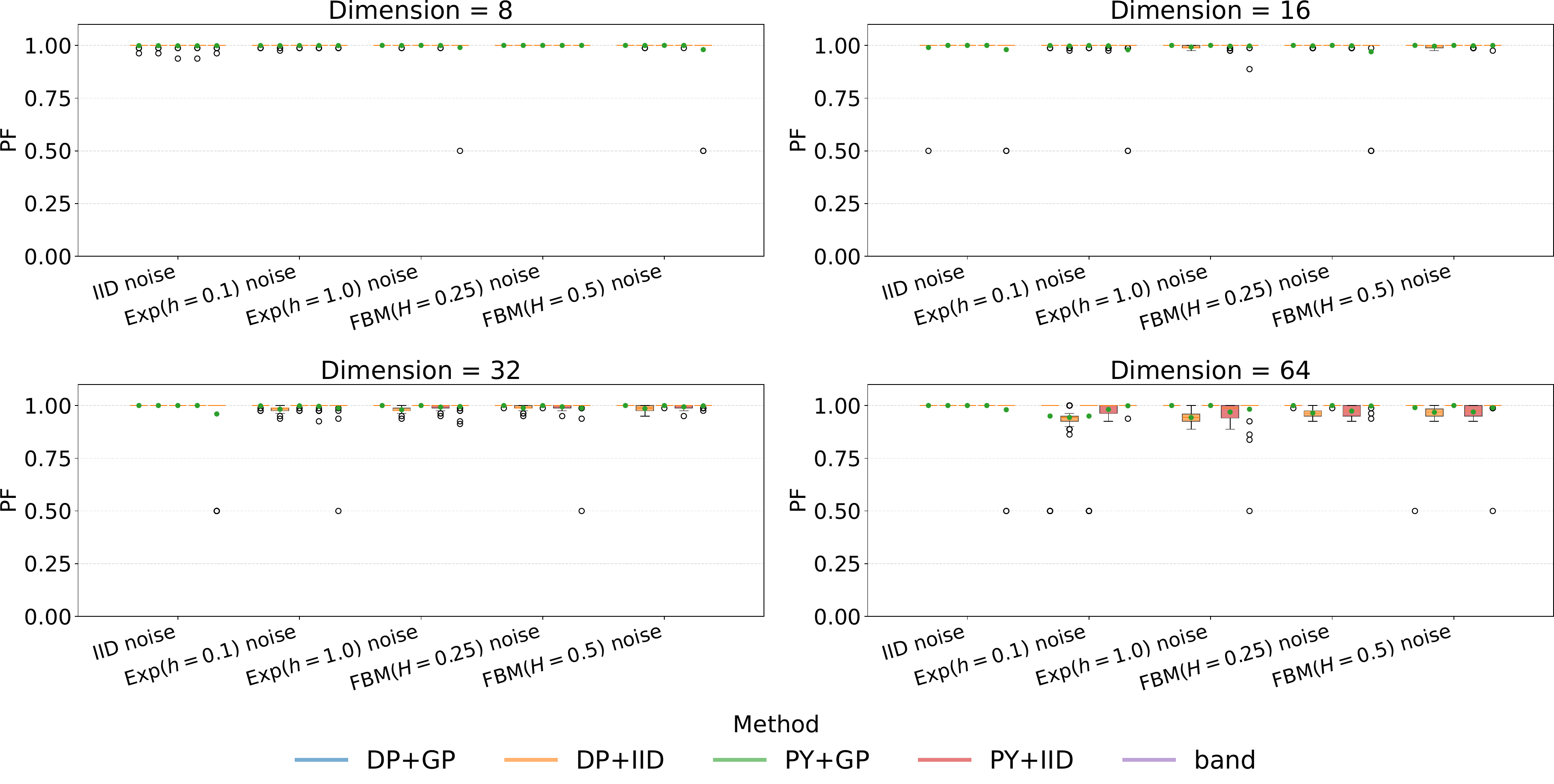}
    \includegraphics[width=0.49\linewidth]{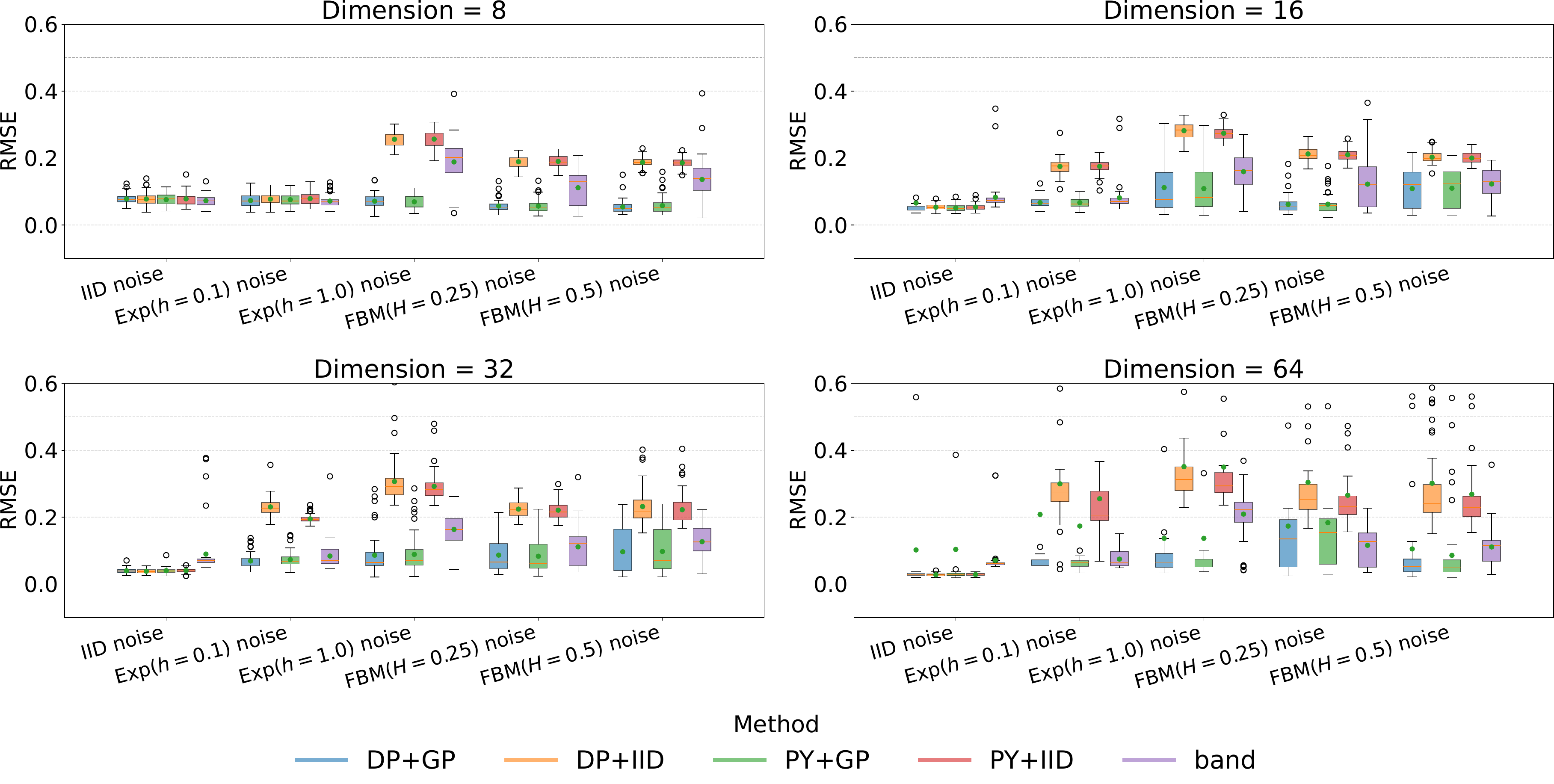}
    \caption{Posterior number of clusters, adjusted Rand index (ARI), purity function (PF), and root mean squared error (RMSE) for grid sizes $m\in\{8,16,32,64\}$ under five noise designs (independent errors; exponential kernel with correlation length $0.1$; exponential kernel with correlation length $1.0$; fractional Brownian motion with Hurst parameter $0.25$; fractional Brownian motion with Hurst parameter $0.5$) and five clustering methods (Dirichlet process with independent errors; Pitman--Yor process with independent errors; Dirichlet process with Gaussian process error using a Mat\'ern kernel; Pitman--Yor process with Gaussian process error using a Mat\'ern kernel; banded covariance model).}
    \label{fig:all_experiment2}
\end{sidewaysfigure}

Second, we investigate the influence of the partition prior by modifying the parameters of the Pitman--Yor process to $(\alpha,\delta)=(1.5,0.2)$. Compared to the parameters used in the main text $(\alpha,\delta)=(1.0,0.1)$, this choice of hyperparameters places more prior mass on partitions with existing clusters. The purpose of this experiment is to verify that the severe over-clustering observed in the IID-based models is a consequence of the error distributional misspecification, rather than being an artifact driven solely by the prior's tendency to favor fewer clusters.
Figure~\ref{fig:all_experiment3} shows the results.
Increasing the prior mass on existing clusters under the DP and PY processes does not prevent the observed over-splitting. Even with more conservative hyperparameters, IID-error models continue to overestimate the number of clusters and perform worse on ARI, PF, and RMSE, whereas the models assuming dependency remain stable and accurate. The consistent ordering across priors indicates that performance is driven primarily by the error-covariance specification rather than by the partition prior. The banded approximation closely matches the full GP, suggesting that the gains are robust to computational approximations. Thus, tuning the prior cannot fairly compensate for a misspecified noise model.

\begin{sidewaysfigure} 
    \centering
    \includegraphics[width=0.49\linewidth]{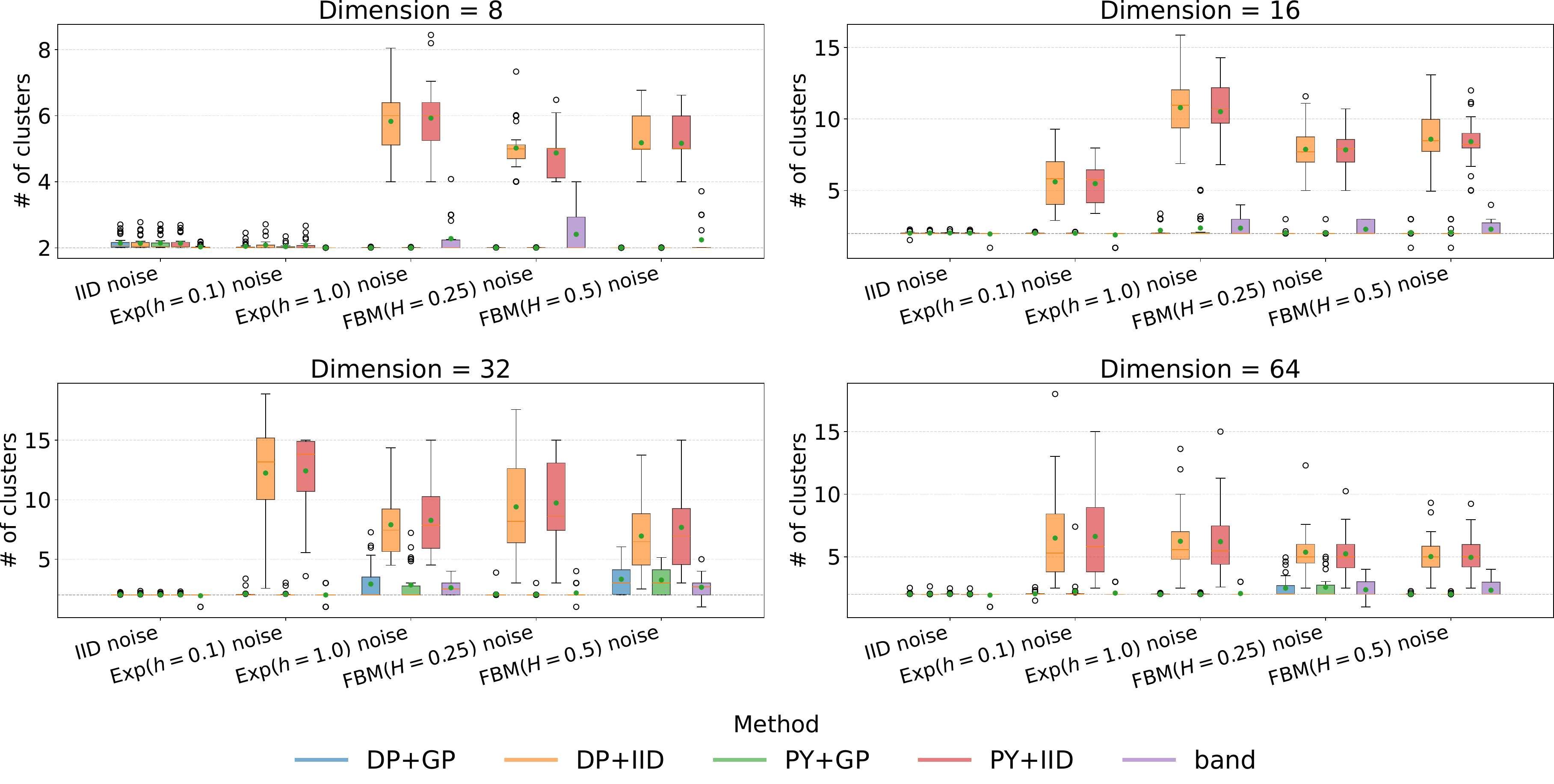}
    \includegraphics[width=0.49\linewidth]{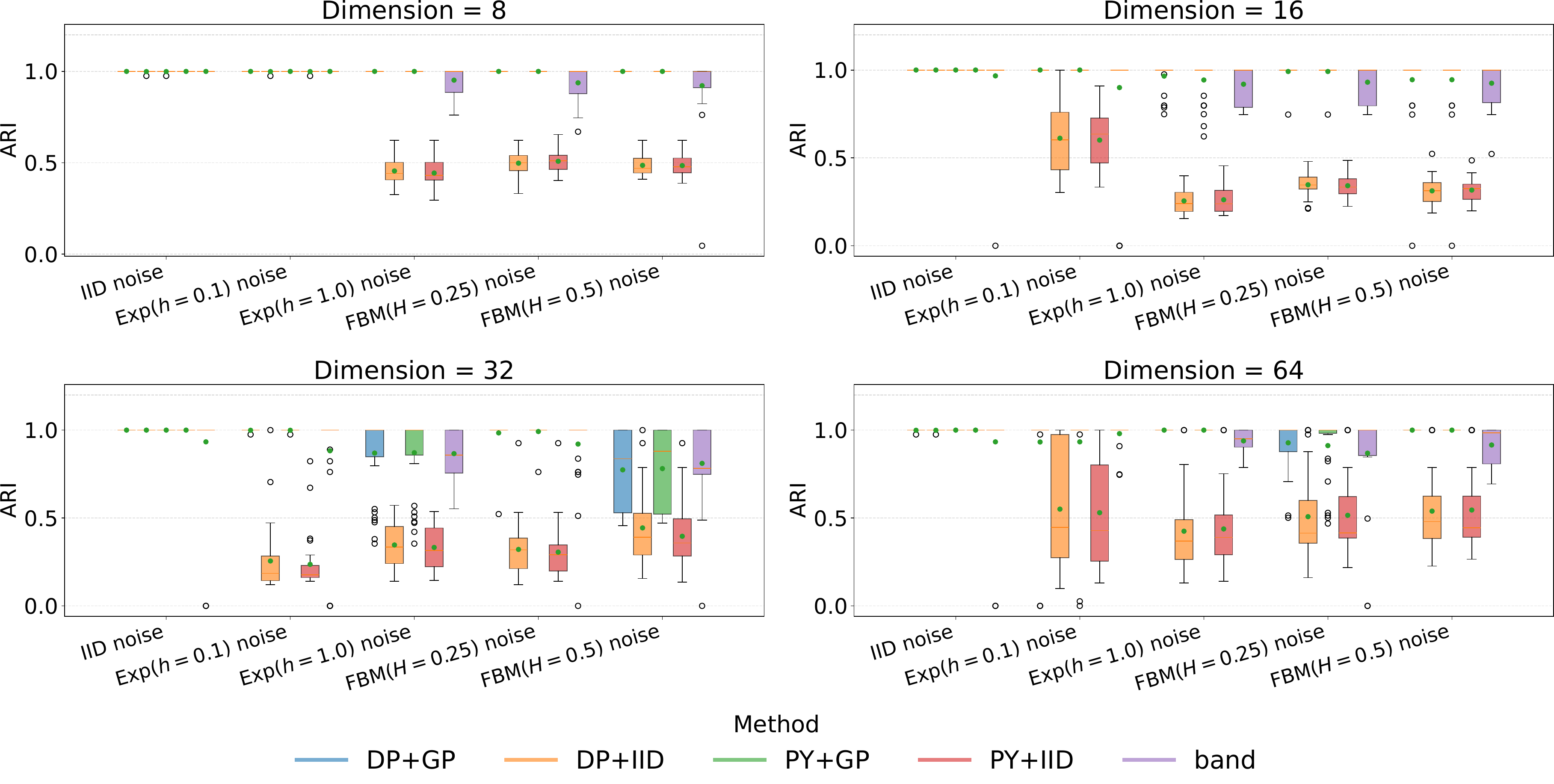}\\[1ex]
    \includegraphics[width=0.49\linewidth]{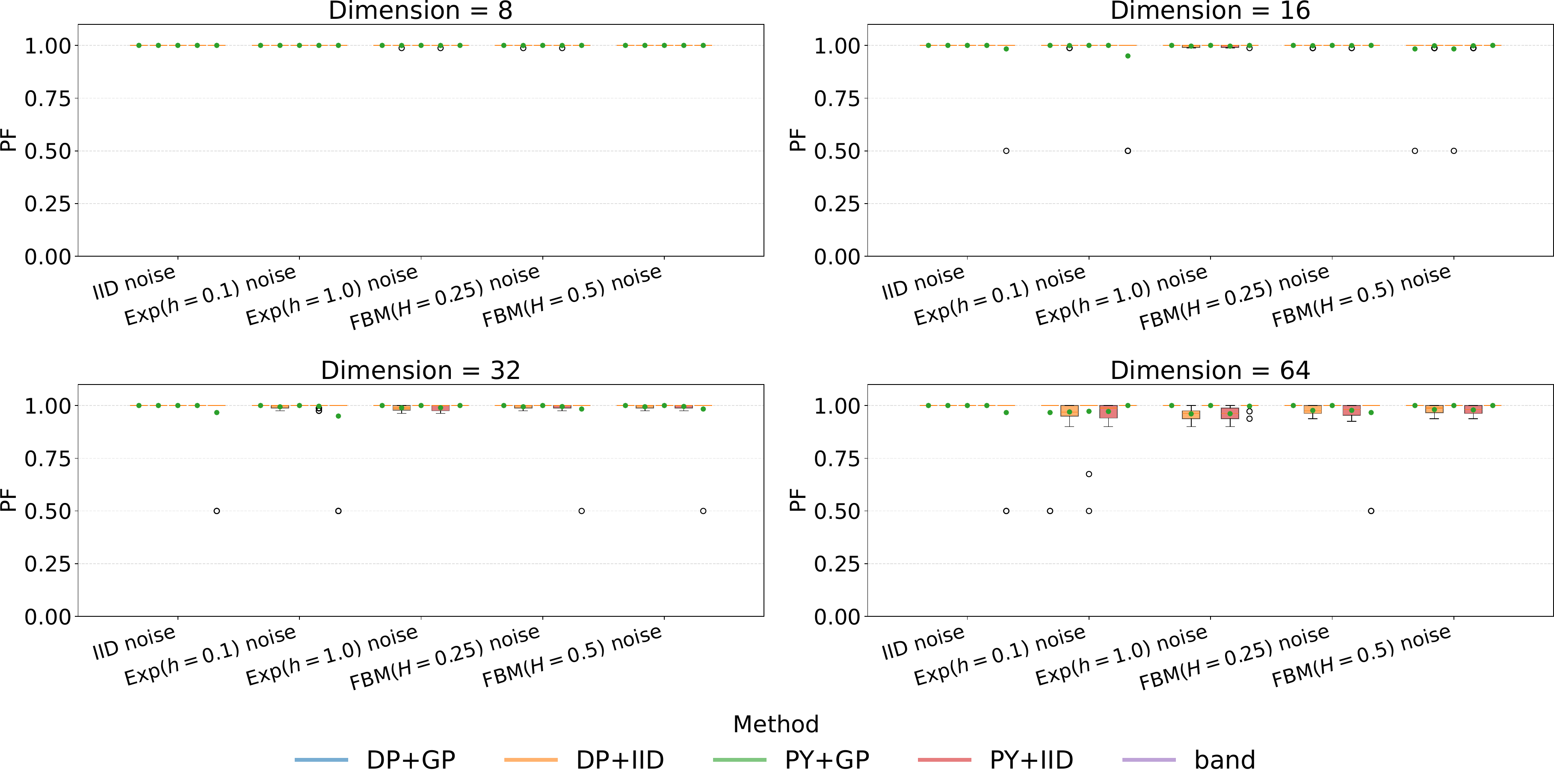}
    \includegraphics[width=0.49\linewidth]{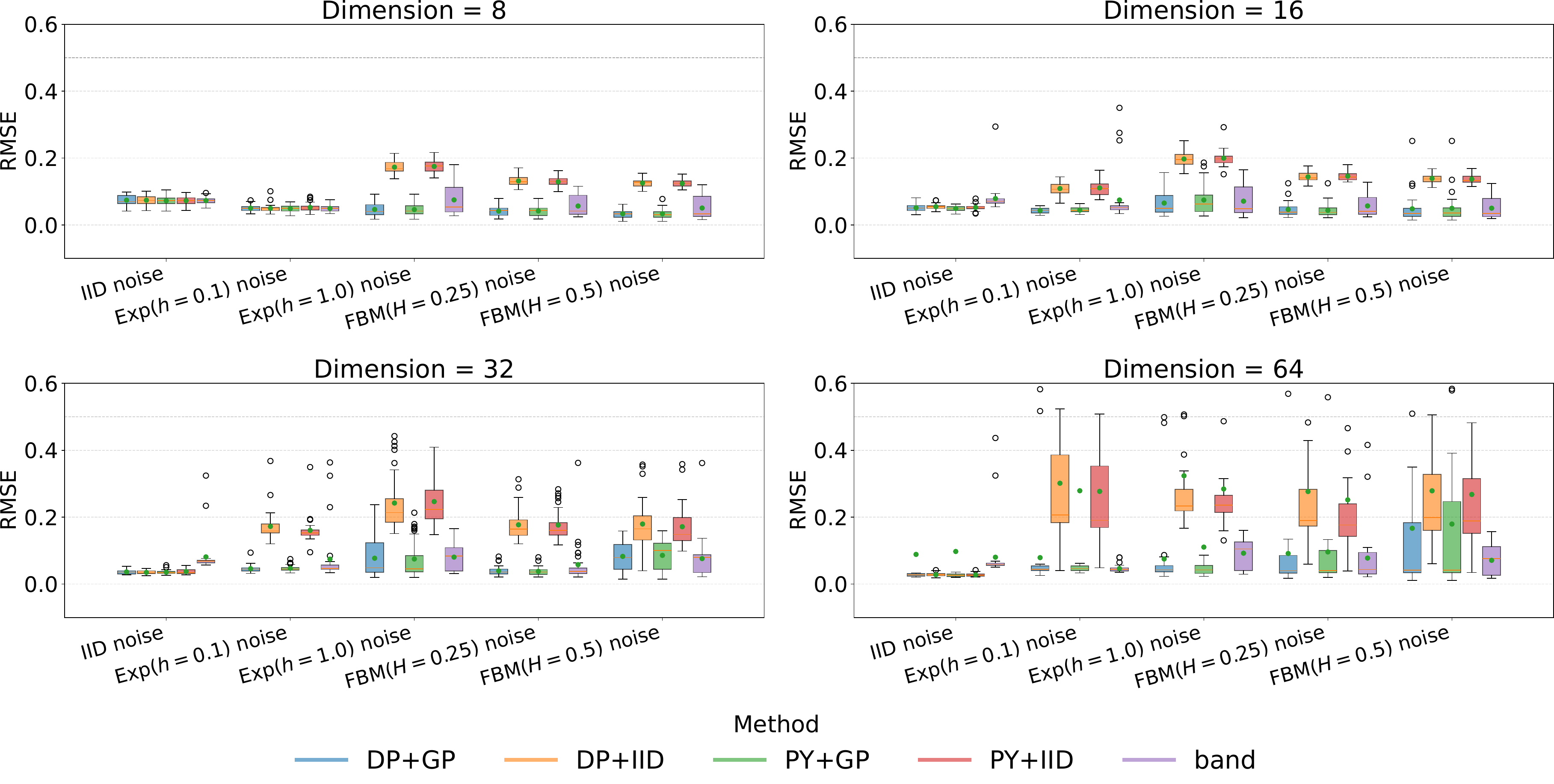}
    \caption{Posterior number of clusters, adjusted Rand index (ARI), purity function (PF), and root mean squared error (RMSE) for grid sizes $m\in\{8,16,32,64\}$ under five noise designs (independent errors; exponential kernel with correlation length $0.1$; exponential kernel with correlation length $1.0$; fractional Brownian motion with Hurst parameter $0.25$; fractional Brownian motion with Hurst parameter $0.5$) and five clustering methods (Dirichlet process with independent errors; Pitman--Yor process with independent errors; Dirichlet process with Gaussian process error using a Mat\'ern kernel; Pitman--Yor process with Gaussian process error using a Mat\'ern kernel; banded covariance model).}
    \label{fig:all_experiment3}
\end{sidewaysfigure}

\section{Proofs}

We provide the proofs of theorems and propositions.

\begin{proof}[Proof of Theorem 1]
The conditional probability ratio of interest is
\begin{align*}
     \frac{\pr^{\ast} \left(z_i=K_{-i}+1
         \mid \bz_{-i},
              \bth_{1:K_{-i}},
              \by_{1:n}\right)}
     {\pr^{true} \left(z_i=K_{-i}+1
         \mid \bz_{-i},
              \bth_{1:K_{-i}},
              \by_{1:n}\right)}
    = \frac{\alpha + \sum_{k=1}^{K_{-i}} n_{-i,k}\phi(\by_i\mid\bth_{k},\bC_y)/\phi(\bm{y_i}\mid\bm{0},\bSigma_2) }{\alpha + \sum_{k=1}^{K_{-i}} n_{-i,k}\phi(\by_i\mid\ \bth_{k},\sigma^2\bm{I}_m)/\phi(\bm{y_i}\mid\bm{0},\bSigma_1) },
\end{align*}
where $\bSigma_1 = \sigma^2\bm{I}_m + \bC_\theta, \bSigma_2 = \bC_y + \bC_\theta$. First, for every $k=1,\dots,K_{-i}$, we show that
\begin{align}\label{eq:first_step}
    \left. \frac{\phi(\by_i\mid\bth_{k},\bC_y)}{\phi(\bm{y_i}\mid\bm{0},\bSigma_2)}  \middle/ \frac{\phi(\by_i\mid\bth_{k},\sigma^2 \bm{I}_m)}{\phi(\bm{y_i}\mid\bm{0},\bSigma_1)} \xrightarrow{ p } \infty \right.,\qquad m\to\infty.
\end{align}
We denote the logarithm of the ratio of the numerators in \eqref{eq:first_step}:
\begin{align}
    L_{m}^{true} = \frac{1}{2}\log\left( \det \bSigma_2\right) - \frac{1}{2}\log\left( \det \bC_y\right)
     - \frac{1}{2}(\by_i - \bth_k)^\top \bC_y^{-1} (\by_i - \bth_k) + \frac{1}{2}\by_i^\top\bSigma_2^{-1}\by_i.  \label{eq:Lm_tr}
\end{align}
We decompose $\by_i$ as the sum of $\bth_k$ and $\bxi_i$ where $\bxi_i \sim \mathcal{N}(\bm{0}, \bC_y)$ and is independent of $\bth_k$. Then, $(\by_i - \bth_k)^\top \bC_y^{-1} (\by_i - \bth_k) = \bxi_i^\top \bC_y^{-1} \bxi_i$. The expectation and variance of $\bxi_i^\top \bC_y^{-1} \bxi_i$ are $m$ and $2m$, respectively. By the Chebyshev's inequality, for all $M > 0$ and $0 \leq \gamma < 1/2$,
\begin{align*}
    \pr\left(\left| \bxi_i^\top \bC_y^{-1}\bxi_i - m \right| \geq Mm^{1-\gamma}\right) &\leq 2M^{-2}m^{2\gamma-1}.
\end{align*}
Therefore, 
\begin{align}\label{eq:Lm_xi}
    &\bxi_i^\top \bC_y^{-1}\bxi_i = m + O_p(m^{1-\gamma}).
\end{align}
Also, $\by_i^\top \bSigma_2^{-1} \by_i = \bth_k^\top \bSigma_2^{-1} \bth_k + 2\bth_k^\top\bSigma_2^{-1}\bxi_i + \bxi_i^\top \bSigma_2^{-1} \bxi_i$. For the first term $\bth_k^\top \bSigma_2^{-1} \bth_k$, the expectation is $\E{\bth_k^\top \bSigma_2^{-1} \bth_k} = \tr(\bSigma_2^{-1}\bC_\bth)$. Here, we set $\bth_k = \bC_\theta^{1/2}\bze_k, \bS_\theta = \bC_\theta^{-1/2}\bC_y\bC_\theta^{-1/2}$, where $\bze_k \sim \mathcal{N}(\bm{0}, \bm{I}_m)$. Since $\bC_\theta$ and $\bC_y$ are symmetric positive definite matrix, $\bS_\theta$ is also symmetric positive definite, and $\bSigma_2$, $\bSigma_2^{-1}$ and $\bth_k^\top \bSigma_2^{-1} \bth_k$ can be written in terms of $\bS_\theta$ as follows: 
\[
    \bSigma_2 = \bC_\theta^{1/2}(\bm{I}_m + \bS_\theta)\bC_\theta^{1/2},\quad \bSigma_2^{-1} = \bC_\theta^{-1/2}(\bm{I}_m + \bS_\theta)^{-1}\bC_\theta^{-1/2},\quad \bth_k^\top \bSigma_2^{-1} \bth_k = \bze_k^{\top}(\bm{I}_m + \bS_\theta)^{-1}\bze_k.
\]
$\|\cdot\|_{\mathrm{F}}$ and $\|\cdot\|_{\mathrm{op}}$ denote the Frobenius norm and the operator norm, respectively. Then, 
\begin{align*}
    &\|(\bm{I}_m + \bS_\theta)^{-1}\|_{\mathrm{F}}^2 = \tr\left\{\left((\bm{I}_m + \bS_\theta)^{-1}\right)^\top (\bm{I}_m + \bS_\theta)^{-1}\right\} = \sum_{j=1}^{m}\left(\frac{1}{1 + \lambda_j(\bS_\theta)}\right)^2 \leq m,\\
    &\|(\bm{I}_m + \bS_\theta)^{-1}\|_{\mathrm{op}} = \max_{j=1,\dots,m}\frac{1}{1 + \lambda_j(\bS_\theta)} \leq 1.
\end{align*}
The $\zeta_j$ are independent and identically distributed as $\mathcal{N}(0,1)$, and there exists an positive constant $K$ such that $\|\zeta_j\|_{\psi_2} \leq K$, where $\|\cdot\|_{\psi_2}$ denote the sub-gaussian norm. By the Hanson-Wright inequality \citep{rudelson2013hanson}, for all $t > 0$,
\begin{align*}
    &\pr\left(\left| \bze_k^{\top}(\bm{I}_m + \bS_\theta)^{-1}\bze_k - \tr (\bSigma_2^{-1}\bC_\theta) \right| \geq t\right)\\
    &\leq \exp\left(-c \min\left\{ \frac{t^2}{K^4\|(\bm{I}_m + \bS_\theta)^{-1}\|_{\mathrm{F}}^2}, \frac{t}{K^2\|(\bm{I}_m + \bS_\theta)^{-1}\|_{\mathrm{op}}} \right\}\right)
    \leq \exp\left(-c \min\left\{ \frac{t^2m^{-1}}{K^4}, \frac{t}{K^2} \right\}\right).
\end{align*}
Hence,
\begin{align}\label{eq:Lm_theta}
    \pr\left(\left| \bze_k^{\top}(\bm{I}_m + \bS_\theta)^{-1}\bze_k -\tr (\bSigma_2^{-1}\bC_\theta) \right| \geq Mm^{1-\gamma}\right) &\leq \exp\left(-c \min\left\{ \frac{M^2m^{2-2\gamma}}{K^4}, \frac{Mm^{1-\gamma}}{K^2} \right\}\right),\notag\\
    \bth_k^\top\bSigma_2^{-1}\bth_k=\bze_k^{\top}(\bm{I}_m + \bS_\theta)^{-1}\bze_k &=  \tr(\bSigma_2^{-1}\bC_\theta) + O_p(m^{1-\gamma}).
\end{align}
For the second term $2\bth_k^\top\bSigma_2^{-1}\bxi_i$, the expectation is $0$, since this is the inner product of independent multivariate normal random variables with expectation $\bm{0}$. The variance is 
\begin{align*}
    \var{\bth_k^\top \bSigma_2^{-1} \bxi_i}
    &= \tr(\bSigma_2^{-1}\bC_\theta\bSigma_2^{-1}\bC_y)\\
    &= \tr(\bC_\theta^{-1/2}(\bm{I}_m + \bS_\theta)^{-1}\bC_\theta^{-1/2} \bC_\theta \bC_\theta^{-1/2}(\bm{I}_m + \bS_\theta)^{-1}\bC_\theta^{-1/2} \bC_y )\\
    &= \tr( (\bm{I}_m + \bS_\theta)^{-2}\bS_\theta )\\
    &= \tr( \bm{U}_{\bS_\theta}^\top(\bm{I}_m + \bm{\Lambda}_{\bS_\theta})^{-2}\bm{\Lambda}_{\bS_\theta} \bm{U}_{\bS_\theta} )\\
    &= \sum_{j=1}^{m}\frac{ \lambda_j(\bS_\theta) }{( 1 + \lambda_j(\bS_\theta) )^2},
\end{align*}
where $\bm{U}_{\bS_\theta}$ is an orthogonal matrix such that $\bS_\theta = \bm{U}_{\bS_\theta}^\top\bm{\Lambda}_{\bS_\theta}\bm{U}_{\bS_\theta}$, where $\bm{\Lambda}_{\bS_\theta} = \mathrm{diag}(\lambda_{1}(\bS_\theta),\dots,\lambda_m(\bS_\theta))$. Since a function $f(x) = x/(1 + x)^2$ on $\mathbb{R}_{>0}$ achieves its maximum $1/4$ at $x=1$, so $\var{\bth_k\bSigma_2^{-1}\bxi_i}$ is bounded above by $m/4$. By the Chebyshev's inequality,
\begin{align*}
    &\pr\left(\left| \bth_k^\top \bm{\Sigma}_2^{-1} \bxi_i \right| \geq Mm^{1-\gamma}\right) \leq \frac{M^{-2}m^{2\gamma-1}}{4}.
\end{align*}
Therefore, 
\begin{align}\label{eq:Lm_inner}
    \bth_k^\top \bm{\Sigma}_2^{-1} \bxi_i =  O_p(m^{1-\gamma}).
\end{align}
For third term $\bxi_i^\top\bSigma_2^{-1}\bxi_{i}$, the same result as $\bth_k^\top \bm{\Sigma}_2^{-1}\bth_k$
can be derived in exactly the same way by interchanging the roles of $\bC_y$ and $\bC_\theta$:
\begin{align}\label{eq:Lm_xi_Sigma2}
    \bxi_i^\top \bSigma_2^{-1} \bxi_i = \mathrm{tr}(\bSigma_2^{-1}\bC_y) + O_p(m^{1-\gamma}).
\end{align}
From \eqref{eq:Lm_tr} to \eqref{eq:Lm_xi_Sigma2}, we have
\begin{align}\label{eq:Lm_order}
     L_m^{true}
     &= \frac{1}{2}\log\left( \det \bSigma_2\right) - \frac{1}{2}\log\left( \det \bC_y\right) + O_p(m^{1-\gamma}).
\end{align}
We denote the following log-ratio of the assumed model in \eqref{eq:first_step} by $L_m^{*}$:
\begin{align*}
    L_{m}^{\ast} &= \log \frac{\phi(\by_i\mid\bth_{k},\sigma^2\bm{I}_m)}{\phi(\by_i\mid\bm{0},\bSigma_1)}\\
    &= \frac{1}{2}\sum_{j=1}^{m}\log\left( 1 + \frac{\lambda_j(\bC_\theta)}{\sigma^2} \right) - \frac{1}{2\sigma^2}\|\bm{y}_i - \bth_k\|^2 + \frac{1}{2}\by_i^\top\bSigma_1^{-1}\by_i.
\end{align*}
The following result can be derived in exactly the same as $L_m^{true}$:
\[
    L_{m}^{\ast} = \frac{1}{2}\sum_{j=1}^{m}\log\left( 1 + \frac{\lambda_j(\bC_\theta)}{\sigma^2} \right) - \frac{1}{2\sigma^2}\tr(\bC_y) + \frac12 \tr\left\{\bSigma_1^{-1}\bSigma_2\right\} + O_p(m^{1-\gamma}).
\]
By Assumption~1,~2,
\begin{align*}
    L_m^{true} - L_m^{*} 
    &= \omega(m^{1-\gamma}) + o(m^{1/2}) + O_p(m^{1-\gamma}),
\end{align*}
which validates \eqref{eq:first_step}. Finally, from \eqref{eq:first_step} and \eqref{eq:Lm_order}, we obtain, as $m\to\infty$,
\begin{align*}
    &\frac{\alpha + \sum_{k=1}^{K_{-i}} n_{-i,k}\phi(\bm{y}_{i}\mid\bth_{k},\sigma^2\bm{I}_m)/\phi(\by_i\mid\bm{0},\bSigma_1) }{\alpha + \sum_{k=1}^{K_{-i}} n_{-i,k} \phi(\by_{i}\mid\bth_{k},\bC_y)/\phi(\by_i\mid\bm{0},\bSigma_2) }\\
    &=\frac{\alpha}{\alpha + \sum_{k=1}^{K_{-i}} n_{-i,k} \phi(\by_{i}\mid\bth_{k},\bC_y)/\phi(\by_i\mid\bm{0},\bSigma_2)} + \sum_{k=1}^{K_{-i}}\frac{ n_{-i,k}\phi(\bm{y}_{i}\mid\bth_{k},\sigma^2\bm{I}_m)/\phi(\by_i\mid\bm{0},\bSigma_1)}{\alpha + \sum_{k} n_{-i,k} \phi(\by_{i}\mid\bth_{k},\bC_y)/\phi(\by_i\mid\bm{0},\bSigma_2)}\\
    &\leq \frac{\alpha  \phi(\by_i\mid\bm{0},\bSigma_2)}{n_{-i,1} \phi(\by_{i}\mid\bth_{k},\bC_y)} + \sum_{k=1}^{K_{-i}}\frac{\phi(\bm{y}_{i}\mid\bth_{k},\sigma^2\bm{I}_m)/\phi(\by_i\mid\bm{0},\bSigma_1)}{\phi(\by_{i}\mid\bth_{k},\bC_y)/\phi(\by_i\mid\bm{0},\bSigma_2)} \xrightarrow{ p } 0.
\end{align*}

\end{proof}

\begin{proof}[Proof of Proposition 1]
From the Stirling approximation $\log n! = n\log n-n+O(\log n)$, we have 
\[
(\det \bC_\theta/\det \bC_y)^{1/m}
     =(m!)^{\beta/m}
     =\exp\left(\beta\log m+O(1)\right)
     =cm^{\beta}(1+o(1)),
\]
where $\beta:=2\nu+1-\kappa$. By Lemma~1, 
\[
L\ge \log\left(1+cm^{\beta}\right)
     =\begin{cases}
        \beta m\log m+O(m), & \beta>0,\\
        m\log(1+c)+O(1),   & \beta=0,\\
        c m^{1+\beta}+o \left(m^{1+\beta}\right), & -1<\beta<0.
      \end{cases}
\]
Hence divergence occurs if $1+\beta>0( \Leftrightarrow \kappa<2\nu+2)$
with the rates.
\end{proof}

\begin{proof}[Proof of Theorem 2]
Write $\bSigma_3:=\bCyp+\bC_\theta$ and $\bSigma_2=\bC_{y}+\bC_\theta$.
Then, 
\[
\log \left(\frac{\phi(\bm{y_i}\mid\bm{0},\bSigma_3)}{\phi(\bm{y_i}\mid\bm{0},\bSigma_2)}\right)
 =\frac12\left\{\log|\bSigma_2|-\log|\bSigma_3|\right\}
 +\frac12\by_i^{\top}
     \left\{(\bSigma_3)^{-1}-(\bSigma_2)^{-1}\right\}\by_i.
\]
By Lemma~\ref{lem:log_det}, the determinant difference is $o(1)$.
By Assumption~3, $\|(\bSigma_3)^{-1}-(\bSigma_2)^{-1}\|_{\mathrm{op}} = \left\|\bSigma_2^{-1} (\bC_{y } - \bC_{y'}) \bSigma_3^{-1}\right\|_{\mathrm{op}} = o( m^{2\beta} m^{-1-2\beta}) = o(m^{-1})$.
Since $\|\by_i\|^2=O_p(m)$ for Gaussian $\by_i$,
$\by_i^{\top}\left(\bSigma_3^{-1}-\bSigma_2^{-1}\right)\by_i=o_p(1)$, and hence
\[
\frac{\phi(\bm{y_i}\mid\bm{0},\bSigma_3)}{\phi(\bm{y_i}\mid\bm{0},\bSigma_2)}=1+o_p(1). \label{eq:ratio32}
\]

For each existing cluster $k\le K_{- i}$, define
\[
\log
   \frac{\phi(\by_i\mid\bth_k,\bC_{y'})}
        {\phi(\by_i\mid\bth_k,\bC_{y})}
 =\frac12\left\{\log(\det \bC_y)-\log(\det \bC_{y'})\right\}
 +\frac12(\by_i-\bth_k)^{\top}
     \left\{(\bC_{y'})^{-1}-(\bC_{y})^{-1}\right\}
     (\by_i-\bth_k).
\]
From Lemma~\ref{lem:log_det}, $\log(\det \bC_y)-\log(\det\bC_{y'})=o(1)$.
By Assumption~3, $\|\bC_{y }^{-1}-\bC_{y'}^{-1}\|_{\mathrm{op}} = \left\|\bC_{y}^{-1} (\bC_{y} - \bC_{y'}) \bC_{y'}^{-1}\right\|_{\mathrm{op}} = o( m^{2\beta} m^{-1-2\beta}) = o(m^{-1})$. By Lemma~\ref{lem:ell2bound}, setting $\bm{\xi}_j\overset{iid}{\sim} \mathcal{N}(\bm{0},\Sigma)$, we have
\[
   \sup_{1\le k \le K_{-i}} \|\by_i -\bth_k\|^2 \le \sup_{1\le j \le n} \|\bxi_j\|^2 = O_p (m +2\sqrt{m\log n} +\log n),
\]
which is $O(m)$ from Assumption~4.
Thus, even if $K_{- i}$ may diverge,
\[
\frac{\phi(\by_i\mid\bth_k,\bC_{y'})}{\phi(\by_i\mid\bth_k,\bC_{y})} = 1 + o_p(1),
\]
uniformly in $k=1,\dots ,K_{- i}$.

Finally we evaluate
\begin{align*}
\frac{\pr^{\ast}(z_i=K_{- i}+1\mid\cdot)}
      {\pr^{\mathrm{true}}(z_i=K_{- i}+1\mid\cdot)}
&=\frac{\phi(\by_i\mid\bm{0},\bSigma_3)}{\phi(\by_i\mid\bm{0},\bSigma_2)}\cdot \frac{\alpha \phi(\by_i\mid\bm{0},\bSigma_2) + \sum_{k=1}^{K^{\setminus i}} n_{-i,k}\phi(\by_i\mid\bth_k,\bC_y) }{\alpha \phi(\by_i\mid\bm{0},\bSigma_3) + \sum_{k=1}^{K^{\setminus i}} n_{-i,k}\phi(\by_i\mid\bth_k,\bCyp) } \\
&=\frac{\phi(\by_i\mid\bm{0},\bSigma_3)}{\phi(\by_i\mid\bm{0},\bSigma_2)}
     \frac{\alpha+\sum_{k=1}^{K^{\setminus i}} n_{-i,k} \tfrac{\phi(\by_i\mid\bth_k,\bC_y)}{\phi(\by_i\mid\bm{0},\bSigma_2)}}
          {\alpha \tfrac{\phi(\by_i\mid\bm{0},\bSigma_3)}{\phi(\by_i\mid\bm{0},\bSigma_2)} +\sum_{k=1}^{K^{\setminus i}} n_{-i,k} \frac{\phi(\by_i\mid\bth_k,\bCyp)}
        {\phi(\by_i\mid\bth_k,\bC_y)} \frac{\phi(\by_i\mid\bth_k,\bC_y)}{\phi(\by_i\mid\bm{0},\bSigma_2)} }
\end{align*}
Recall that $\phi(\by_i\mid\bm{0},\bSigma_3) / \phi(\by_i\mid\bm{0},\bSigma_2)=1+o_p(1)$ and $\phi(\by_i\mid\bth_k,\bCyp)/\phi(\by_i\mid\bth_k,\bC_{y}) = 1+ o_p(1)$. Therefore, even if $n$ and $K^{\setminus i}$ diverge, $\pr^{\ast}(z_i=K^{\setminus i}+1\mid\cdot)/\pr^{\mathrm{true}}(z_i=K^{\setminus i}+1\mid\cdot)\overset{p}{\longrightarrow}1$.
\end{proof}

\begin{proof}[Proof of Proposition~2]
Define $\bC_y$ as the true covariance matrix, $\bC_{approx}$ as its band-matrix approx and $\hat{\bC}_y$ as an estimator of $\bC_{approx}$ by \cite{Lee2023post}.
For simplicity, let us assume the true kernel is Mat\'ern. Note that covariance is
\[  
    C_y(|i-j|)=\sigma^2 k_{|i-j|}, \qquad k_{|i-j|} \approx \left(|i-j|\right)^{\nu -\tfrac12} e^{-|i-j|}   \qquad (|i-j|\to\infty).
\]
and consider approximating it to a band matrix of the form:
\[
    C_{approx,ij} = C_{y}(|i-j|) I\{|i-j|\le r\}.
\]
Then, 
\begin{align*}
    \|\bC_y - \bC_{approx} \|_{\mathrm{op}} &\le \max_i \sum_{j:|j-i|>r} |C_y(|i-j|)|\\
    &\le 2 \sum_{h=r}^{\infty} C_y(h) \\
    &\le 2 \sum_{h=r}^{\infty} h^{\nu -\tfrac12} e^{-h}\\
    &\le 2 \int_{h=r}^{\infty}  h^{\nu -\tfrac12} e^{-h} dh  \\
    &\le 2 r^{ \nu - \tfrac12} e^{-r} \frac{ r }{ r-\nu + \tfrac12 }. 
\end{align*}
Here, the last inequality follows from Lemma~\ref{lem:band}. 
Then, the result follows if $r = \omega((1+2\beta)\log m)$.
\end{proof}

\section{Technical Lemmas}
\begin{lem}
Let $\bC_y,\bC_\theta\in\mathbb S_{++}^m$ be real, symmetric, positive--definite matrices and set
\[
\bS_y  =  \bC_y^{-1/2} \bC_\theta \bC_y^{-1/2}, \qquad 
L  :=  \log\left\{\det(\bm{I}_m + \bS_y)\right\}.
\]
Then the following lower bounds hold.
\begin{align}
L \ge  m \log \left(1+\left(\det \bC_\theta/\det \bC_y\right)^{1/m}\right).
\label{eq:minkowski}
\end{align}
\end{lem}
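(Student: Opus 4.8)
The plan is to obtain the bound as an immediate consequence of Minkowski's determinant inequality, which states that for any two positive semidefinite $m\times m$ matrices $\bm{A},\bm{B}$ one has $\det(\bm{A}+\bm{B})^{1/m}\ge\det(\bm{A})^{1/m}+\det(\bm{B})^{1/m}$. I would apply it with $\bm{A}=\bm{I}_m$ and $\bm{B}=\bS_y$.

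First I would record that $\bS_y$ is well defined and lies in $\mathbb{S}^m_{++}$: since $\bC_y\in\mathbb{S}^m_{++}$, the symmetric square root $\bC_y^{-1/2}$ exists, and $\bS_y=\bC_y^{-1/2}\bC_\theta\bC_y^{-1/2}$ is a congruence transformation of the positive definite matrix $\bC_\theta$, hence positive definite (in particular positive semidefinite, as Minkowski requires). Applying the inequality then gives $\det(\bm{I}_m+\bS_y)^{1/m}\ge\det(\bm{I}_m)^{1/m}+\det(\bS_y)^{1/m}=1+\det(\bS_y)^{1/m}$.

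Next I would simplify $\det(\bS_y)$ by multiplicativity of the determinant: $\det(\bS_y)=\det(\bC_y^{-1/2})\det(\bC_\theta)\det(\bC_y^{-1/2})=\det(\bC_\theta)/\det(\bC_y)$. Substituting this and taking logarithms yields $L=\log\det(\bm{I}_m+\bS_y)=m\log\bigl(\det(\bm{I}_m+\bS_y)^{1/m}\bigr)\ge m\log\bigl(1+(\det\bC_\theta/\det\bC_y)^{1/m}\bigr)$, which is the assertion.

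There is essentially no obstacle; the only point requiring care is the correct invocation of Minkowski's inequality and the verification of positive definiteness above. If a self-contained argument were preferred, one could instead work with the eigenvalues $\lambda_j(\bS_y)$ and use superadditivity of the geometric mean, $\prod_{j=1}^m(1+\lambda_j)^{1/m}\ge 1+\prod_{j=1}^m\lambda_j^{1/m}$ (which is exactly Minkowski's inequality for the diagonalized matrices), together with $\prod_j\lambda_j=\det(\bS_y)=\det\bC_\theta/\det\bC_y$; but citing Minkowski directly is the cleanest route.
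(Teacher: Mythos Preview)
Your proof is correct and follows essentially the same approach as the paper: both arguments rest on Minkowski's determinant inequality. The only cosmetic difference is that the paper applies it to $\bC_y+\bC_\theta$ and then divides through by $(\det\bC_y)^{1/m}$, whereas you apply it directly to $\bm{I}_m+\bS_y$; these are equivalent formulations.
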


\begin{proof}
\[
L
=\log\det \left(\bm{I}_m+\bS_y\right)
=\log\frac{\det(\bC_y+\bC_\theta)}{\det \bC_y}.
\tag{$\ast$}
\]
By Minkowski determinant inequality,  
\[
\left(\det(\bC_y+\bC_\theta)\right)^{1/m}
 \ge 
(\det \bC_y)^{1/m}+(\det \bC_\theta)^{1/m}.
\]
Insert this inequality into $(\ast)$ and rearrange to obtain~\eqref{eq:minkowski}.
\end{proof}

\begin{lem}\label{lem:AbEigenIneq}
    Let $\bm{A}$ and $\bm{B}$ be $m \times m$ Hermitian matrices. Then,
    \[
    \left|\lambda_j(\bm{A}) - \lambda_j(\bm{B})\right| \leq \| \bm{A}-\bm{B} \|_{\mathrm{op}},\quad j=1,\dots,m.
    \]
\end{lem}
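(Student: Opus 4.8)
The plan is to derive the bound from the Courant--Fischer min--max characterization of eigenvalues. Recall that for an $m\times m$ Hermitian matrix $\bm{M}$ with eigenvalues ordered as $\lambda_1(\bm{M})\ge\cdots\ge\lambda_m(\bm{M})$ one has
\[
\lambda_j(\bm{M})=\max_{\dim V=j}\ \min_{x\in V,\ \|x\|_2=1} x^{*}\bm{M}x ,
\]
the maximum being taken over all $j$-dimensional subspaces $V\subseteq\mathbb{C}^m$. The elementary fact I would use alongside this is that, for a Hermitian matrix $\bm{E}$, $\sup_{\|x\|_2=1}|x^{*}\bm{E}x|=\|\bm{E}\|_{\mathrm{op}}$, which follows from the spectral theorem.

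First I would set $\bm{E}:=\bm{A}-\bm{B}$, which is again Hermitian, and note that for every unit vector $x$, $x^{*}\bm{A}x=x^{*}\bm{B}x+x^{*}\bm{E}x\le x^{*}\bm{B}x+\|\bm{E}\|_{\mathrm{op}}$. Fixing a $j$-dimensional subspace $V$ and taking the minimum over unit $x\in V$ on both sides (the additive constant factors out of the minimum), then the maximum over all such $V$, yields $\lambda_j(\bm{A})\le\lambda_j(\bm{B})+\|\bm{E}\|_{\mathrm{op}}$. Repeating the argument with the roles of $\bm{A}$ and $\bm{B}$ exchanged, and using $\|-\bm{E}\|_{\mathrm{op}}=\|\bm{E}\|_{\mathrm{op}}$, gives the reverse inequality $\lambda_j(\bm{B})\le\lambda_j(\bm{A})+\|\bm{E}\|_{\mathrm{op}}$. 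Combining the two bounds and recalling $\bm{E}=\bm{A}-\bm{B}$ produces $|\lambda_j(\bm{A})-\lambda_j(\bm{B})|\le\|\bm{A}-\bm{B}\|_{\mathrm{op}}$ for each $j=1,\dots,m$.

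I do not anticipate a genuine obstacle: this is Weyl's perturbation inequality, and the only points requiring care are stating the min--max formula with the decreasing eigenvalue ordering convention used elsewhere in the paper and justifying $|x^{*}\bm{E}x|\le\|\bm{E}\|_{\mathrm{op}}$ uniformly on the unit sphere. Everything else is a one-line consequence of the variational principle, so the write-up should be short.
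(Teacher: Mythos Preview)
Your argument is correct and is essentially the same as the paper's: the paper invokes Weyl's inequality (citing Bhatia) to obtain $\lambda_j(\bm{B})+\lambda_m(\bm{A}-\bm{B})\le\lambda_j(\bm{A})\le\lambda_j(\bm{B})+\lambda_1(\bm{A}-\bm{B})$ and then bounds the extreme eigenvalues of $\bm{A}-\bm{B}$ by $\|\bm{A}-\bm{B}\|_{\mathrm{op}}$, whereas you simply unpack that citation by proving the same two-sided bound directly from the Courant--Fischer min--max principle. The only cosmetic difference is that the paper quotes the result and you derive it; both routes are standard and equally short.
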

\begin{proof}
  By Weyl's inequality (Theorem I\hspace{-1.2pt}I\hspace{-1.2pt}I.2.1 or Corollary I\hspace{-1.2pt}I\hspace{-1.2pt}I.2.2 in \cite{bhatia1997matrix}), for each $j=1,\dots,m$,
  \begin{gather*}
      \lambda_j(\bm{B}) + \lambda_m(\bm{A}-\bm{B}) \leq \lambda_j(\bm{A}) = \lambda_j(\bm{B}+(\bm{A}-\bm{B})) \leq \lambda_j(\bm{B}) + \lambda_1(\bm{A}-\bm{B})\\
      -\| \bm{A}-\bm{B}\|_{\mathrm{op}} \leq \lambda_m(\bm{A}-\bm{B}) \leq \lambda_j(\bm{A}) - \lambda_j(\bm{B}) \leq \lambda_1(\bm{A}-\bm{B}) \leq \| \bm{A}-\bm{B} \|_{\mathrm{op}}\\
      | \lambda_j(\bm{A}) - \lambda_j(\bm{B}) | \leq \| \bm{A}-\bm{B} \|_{\mathrm{op}}
  \end{gather*}
\end{proof}

\begin{lem}\label{lem:log_det}
Fix $\beta > 0$ and, for $\bC_y, \bCyp, \bC_\theta \in \mathbb{S}_{++}^{m}$, assume that
\[
    \| \bC_y - \bCyp \|_{\mathrm{op}} = o(m^{-1-2\beta}),\qquad \lambda_m(\bC_y), \lambda_m(\bC_\theta) \gtrsim m^{-\beta}.
\]
Then, the following holds:
\begin{align*}
    &| \log(\det \bC_y) - \log(\det \bCyp) | = o(m^{-\beta}),\\
    &| \log\{\det (\bC_y + \bC_\theta)\} - \log\{\det (\bCyp + \bC_\theta)\} | = o(m^{-\beta})
\end{align*}
\end{lem}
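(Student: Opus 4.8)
The plan is to isolate a single quantitative claim and apply it twice. The claim: if $\bm{A},\bm{B}\in\mathbb{S}^{m}_{++}$ satisfy $\lambda_m(\bm{B})\gtrsim m^{-\beta}$ and $\|\bm{A}-\bm{B}\|_{\mathrm{op}}=o(m^{-1-2\beta})$, then $|\log(\det\bm{A})-\log(\det\bm{B})|=o(m^{-\beta})$. The two bounds in the lemma follow by taking $(\bm{A},\bm{B})=(\bCyp,\bC_y)$ and then $(\bm{A},\bm{B})=(\bCyp+\bC_\theta,\bC_y+\bC_\theta)$. In the second application the hypotheses are inherited: $(\bCyp+\bC_\theta)-(\bC_y+\bC_\theta)=\bCyp-\bC_y$, so the operator-norm rate is unchanged, and by Weyl's inequality $\lambda_m(\bC_y+\bC_\theta)\ge\lambda_m(\bC_y)+\lambda_m(\bC_\theta)\ge\lambda_m(\bC_y)\gtrsim m^{-\beta}$. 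Note that it is always $\bm{B}$ — the matrix carrying the assumed eigenvalue lower bound — whose inverse will appear, so no lower bound on $\lambda_m(\bCyp)$ is ever needed.

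To prove the core claim I would write $\log(\det\bm{A})-\log(\det\bm{B})=\log\det(\bm{B}^{-1/2}\bm{A}\bm{B}^{-1/2})=\log\det(\bm{I}_m+\bm{E})$, where $\bm{E}:=\bm{B}^{-1/2}(\bm{A}-\bm{B})\bm{B}^{-1/2}$ is symmetric. Submultiplicativity of the operator norm together with $\|\bm{B}^{-1/2}\|_{\mathrm{op}}^2=\lambda_m(\bm{B})^{-1}$ give $\|\bm{E}\|_{\mathrm{op}}\le\lambda_m(\bm{B})^{-1}\|\bm{A}-\bm{B}\|_{\mathrm{op}}\lesssim m^{\beta}\cdot o(m^{-1-2\beta})=o(m^{-1-\beta})$. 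In particular $\|\bm{E}\|_{\mathrm{op}}\to0$, so for all large $m$ the spectrum of $\bm{I}_m+\bm{E}$ lies in $(1/2,3/2)$, each eigenvalue being $1+\lambda_j(\bm{E})$. Using $|\log(1+x)|\le 2|x|$ for $|x|\le 1/2$,
\[
|\log(\det\bm{A})-\log(\det\bm{B})|=\left|\sum_{j=1}^{m}\log\bigl(1+\lambda_j(\bm{E})\bigr)\right|\le 2\sum_{j=1}^{m}|\lambda_j(\bm{E})|\le 2m\|\bm{E}\|_{\mathrm{op}}=o\bigl(m\cdot m^{-1-\beta}\bigr)=o(m^{-\beta}),
\]
as required.

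The argument is mostly bookkeeping, and the single quantitative point that must be gotten right is the cancellation $m\cdot m^{\beta}\cdot o(m^{-1-2\beta})=o(m^{-\beta})$ — which is precisely why Assumption~\ref{as:SC} demands the operator-norm gap to be $o(m^{-1-2\beta})$ rather than merely $o(m^{-1})$. The only mildly delicate step is justifying that the linearization $\log(1+x)\approx x$ may be applied uniformly to all $m$ eigenvalues before summing them; this is handled by first establishing $\|\bm{E}\|_{\mathrm{op}}\to0$, which confines the entire spectrum of $\bm{I}_m+\bm{E}$ to a fixed compact subinterval of $(0,\infty)$ on which $\log$ is Lipschitz.
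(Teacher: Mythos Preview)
Your proof is correct and close in spirit to the paper's, but packaged a bit differently. The paper works eigenvalue by eigenvalue: it first uses Weyl's inequality to transfer the lower bound $\lambda_m(\bC_y)\gtrsim m^{-\beta}$ to $\lambda_m(\bCyp)$, then bounds $\sum_j|\log\lambda_j(\bC_y)-\log\lambda_j(\bCyp)|$ via the mean value theorem and Weyl again, arriving at the same $m\cdot m^{\beta}\cdot o(m^{-1-2\beta})=o(m^{-\beta})$ cancellation. Your route through $\log\det(\bm{I}_m+\bm{E})$ with $\bm{E}=\bm{B}^{-1/2}(\bm{A}-\bm{B})\bm{B}^{-1/2}$ is a little cleaner: because only $\bm{B}^{-1/2}$ ever appears, you never need to show $\lambda_m(\bCyp)\gtrsim m^{-\beta}$, and you avoid invoking Weyl to compare the individual $\lambda_j$'s of the two matrices. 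The paper's version, on the other hand, makes the per-eigenvalue perturbation explicit, which some readers may find more transparent. Both approaches are elementary and yield the identical rate; the difference is stylistic rather than substantive.
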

\begin{proof}
By Lemma \ref{lem:AbEigenIneq},
\begin{gather*}
    \lambda_m(\bm{B}) \geq \lambda_m(\bm{A}) - \| \bm{A}-\bm{B} \|_{\mathrm{op}}\gtrsim m^{-\beta}.
\end{gather*}
Since $\bC_y, \bCyp \in \mathbb{S}_{++}^{m}$,
\begin{align*}
    \left| \log(\det \bC_y) - \log(\det \bCyp) \right| \leq \sum_{j=1}^{m}| \log\lambda_j(\bC_y) - \log\lambda_j(\bCyp) |.
\end{align*}
By the mean value theorem, for each $j=1,\dots,m$, there exists $c_j\in[\underline{\lambda}_j, \overline{\lambda}_j]$, where $\underline{\lambda}_j = \min\{\lambda_j(\bC_y), \lambda_j(\bCyp)\},\overline{\lambda}_j=\max\{\lambda_j(\bC_y),\lambda_j(\bCyp) \}$ such that
\begin{align*}
    \left| \log\{\lambda_j(\bC_y)\} - \log\{\lambda_j(\bCyp)\} \right| 
    &= c_j^{-1}| \lambda_j(\bC_y) - \lambda_j(\bCyp) |.
\end{align*}
Therefore,
\begin{align*}
\sum_{j=1}^{m}| \log\lambda_j(\bC_y) - \log\lambda_j(\bCyp) | &\leq \sum_{j=1}^{m}\underline{\lambda}_j^{-1}|\lambda_j(\bC_y) - \lambda_j(\bCyp) |\\
&\leq \| \bC_y-\bCyp \|_{\mathrm{op}}\sum_{j=1}^{m}\underline{\lambda}_j^{-1} \\
&\leq \| \bC_y-\bCyp \|_{\mathrm{op}}m\underline{\lambda}_m^{-1} \\
&\lesssim m^{1+\beta}\| \bC_y-\bCyp \|_{\mathrm{op}} = o(m^{-\beta}).
\end{align*}
\end{proof}

\begin{lem}\label{lem:ell2bound}
Assume $\|\bm{\Sigma}\|_{\mathrm{op}}$ is finite constant. For $\by \sim \mathcal N(\bm{0},\bSigma)$, we have $\|\by\|^2 \le \lambda_1(\bSigma)m+O_p(m^{1/2}) = O_p(m)$ as $m\to \infty$. For $n$ i.i.d. copies $\{\by_i\}_{i=1}^n$ of $\by$, $\pr(\sup_{1\le i\le n}\|\bSigma^{-1/2} \by_i\|^2 \ge m + 2\sqrt{2m\log n} + 4\log n )\le 1/n$.
\end{lem}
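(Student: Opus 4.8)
The plan is to handle the two assertions separately; both follow from standard concentration facts for Gaussian quadratic forms, so the argument is essentially routine.

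For the first bound I would write $\by=\bSigma^{1/2}\bz$ with $\bz\sim\mathcal N(\bm 0,\bm{I}_m)$, so that $\|\by\|^2=\bz^\top\bSigma\bz$ is a quadratic form in i.i.d.\ standard normals. Its mean equals $\tr(\bSigma)=\sum_{j=1}^m\lambda_j(\bSigma)\le\lambda_1(\bSigma)m$, and its variance equals $2\tr(\bSigma^2)=2\sum_{j=1}^m\lambda_j(\bSigma)^2\le 2\lambda_1(\bSigma)^2 m$. Since $\lambda_1(\bSigma)=\|\bSigma\|_{\mathrm{op}}$ is a fixed finite constant, Chebyshev's inequality gives $\|\by\|^2-\tr(\bSigma)=O_p(m^{1/2})$; combining this with $\tr(\bSigma)\le\lambda_1(\bSigma)m$ yields $\|\by\|^2\le\lambda_1(\bSigma)m+O_p(m^{1/2})=O_p(m)$ as $m\to\infty$.

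For the second bound, note that $\bSigma^{-1/2}\by_i\sim\mathcal N(\bm 0,\bm{I}_m)$, hence $W_i:=\|\bSigma^{-1/2}\by_i\|^2\sim\chi^2_m$. The Laurent--Massart deviation inequality for chi-square variables states that, for every $t>0$, $\pr(W_i\ge m+2\sqrt{mt}+2t)\le e^{-t}$. Choosing $t=2\log n$ gives $\pr(W_i\ge m+2\sqrt{2m\log n}+4\log n)\le e^{-2\log n}=n^{-2}$, and a union bound over the $n$ i.i.d.\ copies $\by_1,\dots,\by_n$ yields $\pr\big(\sup_{1\le i\le n}W_i\ge m+2\sqrt{2m\log n}+4\log n\big)\le n\cdot n^{-2}=n^{-1}$, which is exactly the claim.

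There is no serious obstacle here: the only non-elementary ingredient is the Laurent--Massart exponential tail bound for $\chi^2$ variables, which is used in place of a mere Chebyshev bound so that the per-curve failure probability decays fast enough for the union bound over the $n$ curves to survive. This is precisely what later permits the estimate to be combined with Assumption~\ref{as:SS}, under which $\log n=o(m)$ and hence $m+2\sqrt{2m\log n}+4\log n=O(m)$.
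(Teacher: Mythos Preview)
Your proof is correct and follows essentially the same approach as the paper: the paper's own proof is a two-line sketch invoking the operator-norm bound together with $\chi^2_m=m+O_p(m^{1/2})$ for the first part, and the Laurent--Massart $\chi^2$ tail bound plus a union bound for the second, which is exactly what you carry out in detail. The only cosmetic difference is that for the first assertion the paper bounds $\|\by\|^2\le\lambda_1(\bSigma)\|\bz\|^2$ directly via the operator norm before applying $\|\bz\|^2=m+O_p(m^{1/2})$, whereas you compute the mean and variance of $\bz^\top\bSigma\bz$ and apply Chebyshev; both routes yield the same conclusion.
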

\begin{proof}
    The first assertion follows from the definition of $\|\cdot\|_{\mathrm{op}}$ and $\chi_2^{m} = m + O_p(m^{1/2})$. The second assertion follows from the Laurent--Massart $\chi^2$ bound and union bound.
\end{proof}

\begin{lem}\label{lem:band}
Let $\nu>1/2$ and $b$ satisfy $b+1>\nu-1/2$. Then
\[
   \int_{b+1}^{\infty} h^{ \nu-\frac12} e^{-h} dh
    \le 
   (b+1)^{\nu-\frac12} e^{-(b+1)} 
   \frac{ b+1 }{ b+1-\left(\nu-\tfrac12\right)}.
\]
\end{lem}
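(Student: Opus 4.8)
The plan is to reduce the statement to a one-line estimate after a change of variables. Write $a:=\nu-\tfrac12>0$ and $t:=b+1$, so that the hypotheses become $a>0$ and $t>a$, and the claimed bound reads $\int_t^\infty h^a e^{-h}\,dh \le t^a e^{-t}\,\tfrac{t}{t-a}$. Substituting $h=t+s$ gives
\[
\int_t^\infty h^a e^{-h}\,dh \;=\; t^a e^{-t}\int_0^\infty\Bigl(1+\tfrac{s}{t}\Bigr)^a e^{-s}\,ds,
\]
so it remains to bound the last integral by $t/(t-a)$.

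Next I would control the integrand using the elementary inequality $\log(1+x)\le x$, valid for $x>-1$: multiplying by $a>0$ and exponentiating yields $(1+s/t)^a\le e^{as/t}$ for every $s\ge 0$. Hence
\[
\int_0^\infty\Bigl(1+\tfrac{s}{t}\Bigr)^a e^{-s}\,ds \;\le\; \int_0^\infty e^{-s\left(1-a/t\right)}\,ds \;=\; \frac{1}{1-a/t} \;=\; \frac{t}{t-a},
\]
where the integral converges precisely because $1-a/t>0$, i.e. $b+1>\nu-\tfrac12$. Combining the two displays gives the stated inequality.

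I do not expect a genuine obstacle here: the lemma is an elementary tail bound for the upper incomplete Gamma function, and the only point requiring care is to note that the hypothesis $b+1>\nu-\tfrac12$ is exactly what makes the final exponential integral finite and the factor $t/(t-a)$ positive. As an equally short alternative that avoids the substitution, one may integrate by parts: $\int_t^\infty h^a e^{-h}\,dh = t^a e^{-t} + a\int_t^\infty h^{a-1}e^{-h}\,dh$, then use $h^{a-1}\le h^a/t$ on $[t,\infty)$ to obtain $I\le t^a e^{-t}+(a/t)I$ and solve for $I=\int_t^\infty h^a e^{-h}\,dh$; both routes use only $a>0$ and $t>a$, and I would present whichever reads more cleanly in context.
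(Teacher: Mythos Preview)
Your proof is correct. Your primary route differs from the paper's: you substitute $h=t+s$ and use the elementary bound $(1+s/t)^a\le e^{as/t}$ (from $\log(1+x)\le x$ and $a>0$) to reduce to an exponential integral, whereas the paper integrates by parts to obtain $I = t^a e^{-t} + a\int_t^\infty h^{a-1}e^{-h}\,dh$, bounds $h^{a-1}\le h^a/t$ on $[t,\infty)$, and solves the resulting inequality $I\le t^a e^{-t}+(a/t)I$ for $I$. Amusingly, the ``equally short alternative'' you sketch at the end is exactly the paper's argument. Both approaches use only $a>0$ and $t>a$; your substitution route avoids the self-referential manipulation and makes the role of the hypothesis $t>a$ transparent as a convergence condition, while the paper's integration-by-parts route is the classical incomplete-Gamma tail trick and generalises more mechanically to iterated bounds if sharper constants were ever needed.
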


\begin{proof}
Set $x_0:=b+1$ and $s:=\nu-\tfrac12+1$; thus $s>1$ and $x_0>s-1$ by hypothesis.
Define
\[
   I := \int_{x_0}^{\infty}h^{ s-1} e^{-h} dh .
\]

Integrating by parts with 
$u=h^{s-1}$ and $dv=e^{-h}dh$ gives
\[
   I
   = \left[ - h^{s-1}e^{-h} \right]_{x_0}^{\infty}
     + (s-1)\int_{x_0}^{\infty} h^{ s-2}e^{-h} dh
   = x_0^{s-1}e^{-x_0} 
     + (s-1) \int_{x_0}^{\infty} h^{ s-2}e^{-h} dh .
\]
For $h\ge x_0$ we have $h^{ s-2}\le x_0^{-1}h^{ s-1}$, so
\[
   \int_{x_0}^{\infty} h^{ s-2}e^{-h} dh
    \le 
   x_0^{-1}\int_{x_0}^{\infty} h^{ s-1}e^{-h} dh
   = x_0^{-1} I .
\]
Hence
\[
   I 
   \le x_0^{s-1}e^{-x_0} 
       + (s-1)x_0^{-1}I .
\]
Rearranging,
\[
   I\left(1-\tfrac{s-1}{x_0}\right)
    \le  x_0^{s-1}e^{-x_0},
   \qquad\text{and}\qquad
   1-\tfrac{s-1}{x_0}>0
   \ \ (\text{since }x_0>s-1).
\]
Therefore
\[
   I
    \le 
   x_0^{s-1}e^{-x_0} 
   \frac{x_0}{x_0-(s-1)}
   =
   (b+1)^{\nu-\frac12}e^{-(b+1)}
   \frac{ b+1 }{ b+1-\left(\nu-\tfrac12\right)} ,
\]
which is exactly the desired inequality.
\end{proof}

\end{document}